\newcommand{\alg}{\ensuremath{{\mathtt A}}\xspace}
\newcommand{\opt}{\ensuremath{{\mathtt{OPT}}}\xspace}
\newcommand{\compr}{\ensuremath{{\rho}}\xspace}
\def\epsilon{\varepsilon}
\newcommand{\R}{\mathbf{R}}
\newcommand{\RKP}{\text{\textsc{ReserveKP}}\xspace}
\newcommand{\OKP}{\text{\textsc{OnlineKP}}\xspace}
\newcommand{\fourfourfive}{\ensuremath{\alpha_4}\xspace}
\newcommand{\eg}{e.\,g.}
\newcommand{\ie}{i.\,e.}
\newtheorem{theorem}{Theorem}
\newtheorem{lemma}[theorem]{Lemma}
\journal{Journal of Computer and System Sciences}
\begin{document}

\begin{frontmatter}

\title{Online Simple Knapsack with Reservation Costs}

\author[author1]{Hans-Joachim B\"{o}ckenhauer}
\affiliation[author2]{organization={Department of Computer Science, 
ETH Z\"{u}rich},
            addressline={Universit\"{a}tstrasse 6, 8092 Z\"{u}rich}, 
            country={Switzerland}}
\ead{hjb@inf.ethz.ch}

\author[author2]{Elisabet Burjons}
\affiliation[author1]{organization={Department of Computer Science, 
RWTH Aachen University},
            addressline={Ahornstrasse 55, 52074 Aachen},
            country={Germany}}
\ead{burjons@cs.rwth-aachen.de}

\author[author3]{Fabian Frei\corref{cor3}\fnref{fn1}}
\affiliation[author3]{organization={Department of Computer Science, 
ETH Z\"{u}rich},
            addressline={Universit\"{a}tstrasse 6, 8092 
            Z\"{u}rich}, 
            country={Switzerland}}
\ead{fabian.frei@inf.ethz.ch}
\cortext[cor3]{Corresponding author}

\author[author4]{Juraj Hromkovi\v{c}}
\affiliation[author4]{organization={Department of Computer Science, 
ETH Z\"{u}rich},
            addressline={Universit\"{a}tstrasse 6, 8092 
            Z\"{u}rich}, 
            country={Switzerland}}
\ead{juraj.hromkovic@inf.ethz.ch}

\author[author5]{Henri Lotze}
\affiliation[author5]{organization={Department of Computer Science, 
RWTH Aachen University},
            addressline={Ahornstrasse 55, 52074 Aachen},
            country={Germany}}
\ead{lotze@cs.rwth-aachen.de}

\author[author6]{Peter Rossmanith}
\affiliation[author6]{organization={Department of Computer Science, 
RWTH Aachen University},
            addressline={Ahornstrasse 55, 52074 Aachen},
            country={Germany}}
\ead{rossmani@cs.rwth-aachen.de}

\fntext[f1]{A preliminary version of parts of this paper has appeared in the Proceedings of the 38th International Symposium on Theoretical Aspects of Computer Science (STACS 2021)~\cite{BBHLR21}.}

\begin{abstract}
In the \emph{online simple knapsack problem} we are given a knapsack of unit size 1.
Items of size smaller or equal to 1 are presented in an iterative fashion and 
an algorithm has to decide for each item 
whether to reject or permanently include it into the knapsack without any
knowledge about the rest of the instance. The goal is to pack the knapsack
as full as possible. In this work, we introduce a third option additional to
those of packing and rejecting an item, namely that of reserving an item for the cost
of a fixed fraction $\alpha$ of its size. An algorithm may pay this fraction in order to
postpone its decision on whether to include or reject the item until after the
last item of the instance was presented.

While the classical \emph{online simple knapsack problem} does not admit any constantly
bounded competitive ratio in the deterministic setting, we find that adding the
possibility of reservation makes the problem constantly competitive, with
varying competitive ratios depending on the value of the fraction $\alpha$. 
We give tight bounds for the whole range of reservation costs, which is
split up into four connected areas, depending on the value of $\alpha$:
Up to a value of $0.25$, the competitive ratio is $2$. This is followed by a
competitive ratio of $(1+\sqrt{5-4\alpha})/(2(1-\alpha))$ for values up to $\sqrt{2}-1$.
Afterwards, the competitive ratio is $2+\alpha$ for values up to $\phi -1$,
where $\phi$ is the golden ratio. Finally, the competitive
ratio diverges with a value of $1/(1-\alpha)$ for values between $\phi-1$ and $1$.

With our analysis, we find a counterintuitive characteristic of the problem:
Intuitively, one would expect that the possibility of rejecting items becomes more 
and more helpful for an online algorithm with growing reservation costs.
However, for reservation costs above $\sqrt{2}-1$, an algorithm that
is unable to reject any items tightly matches the lower bound and is thus the
best possible. On the other hand, for any positive reservation cost smaller than 
$\sqrt{2}-1$, any algorithm that is unable to reject any items
performs considerably worse than one that is able to reject.
\end{abstract}

\begin{keyword}
Online problem\sep 
Simple knapsack\sep 
Reservation costs
\end{keyword}

\end{frontmatter}

\section{Introduction}
Online algorithms can be characterized by receiving their input in an
iterative fashion and having to act in an irrevocable way on each
piece of the input, \eg, by deciding whether to include an element
into the solution set or not. This has to be done with no additional
knowledge about the contents or even the length of the rest of the
instance that is still to be revealed. The goal, as with 
offline algorithms, is to optimize some objective function. In order to
measure the performance of an online algorithm, it is
compared to that of an optimal offline algorithm on the same
instance. The worst-case ratio between the performances of these algorithms over all instances
is then called the \emph{strict competitive ratio} of an online
algorithm, as introduced by Sleator and Tarjan \cite{SleatorT85}.

One of the arguably most basic and famous problems of online computation is
called the \emph{ski rental problem} \cite{Komm16}. In this problem, someone is
going on a skiing holiday of not yet fixed duration without owning the
necessary equipment. On each day, she decides, based solely on that day's
short-term weather report, whether skiing is possible on that day or not.
On each day with suitable weather, she can either buy the equipment or rent
it for that day

for a fixed percentage of the cost of buying.
Arguably, the only interesting instances are those in which a selected
number of days are suitable for skiing, followed by the rest of the
days at which she is unable to go skiing anymore. This is
simply due to the fact that, as long as she has not decided
to buy a pair of ski yet, a day at which no skiing is possible
requires no buy-or-rent decision. Thus, the problem can be simplified
as follows: The input is a string of some markers that represent days 
suitable for skiing, and as
soon as the instance ends, skiing is no longer possible.

This notion of delaying a decision for a fixed percentage of the cost is
the model that we want to study in this work. Note that, while the
ski-rental problem in its above-mentioned form only models a single
buy-or-rent decision (buying or renting a single commodity), iterated
versions of it have been discussed in the literature, with important
applications, \eg, to energy-efficient computing
\cite{ISG07,Albers10,BDKS15}. In this paper, we investigate the power of
delaying decisions for another more involved problem, namely the online
knapsack problem.

The \emph{knapsack problem} is defined as follows. Given a set of items $I$, a size
function $w\colon I \to \R$ and a gain function $g\colon I \to \R$, find a subset $S
\subseteq I$ such that the sum of sizes of $S$ is less or equal to the
so-called \emph{knapsack capacity} (which we assume to be normalized to $1$
in this paper) and the
sum of the gains is maximized. The online variant of this problem 
reveals the items of $I$ piece by piece, with an algorithm having to immediately
decide whether to pack a revealed item or to discard it.

The knapsack problem is a classical hard optimization problem, with the decision
variant being shown to be NP-complete as one of Karp's 21 NP-complete problems
\cite{Karp72}. The offline variant of this problem was studied extensively in
the past, showing for example that it admits a fully polynomial time
approximation scheme \cite{IbarraK75}.

A variant of this problem in which the gains of all items coincide with their
respective sizes is called the \emph{simple knapsack problem}; it is also NP-complete. Both variants do not
admit a constantly bounded competitive ratio \cite{Marchetti-SpaccamelaV95}. In
this paper, we focus on the online version of the latter variant, which we
simply refer to as the \emph{online knapsack problem}, for short \OKP, if not explicitly 
stated otherwise.

We propose a rather natural generalization of the online
knapsack problem.  Classically, whenever an item of the instance is
presented, an online algorithm has to irrevocably decide whether to include
it into its solution (\ie, pack it into the knapsack) or to discard it. In
our model, the algorithm is given a third option, which is to reserve
an item for the cost of a fixed percentage $0 \leq \alpha \leq 1$ of
its value.  The algorithm may reserve an arbitrary number of items of
the instance and decide to pack or discard them at any later point.
Philosophically speaking, the algorithm may pay a prize in order to
(partially) ``go offline.'' It is easy to see that, for $\alpha = 0$, the
complete instance can be learned before making a decision without any
disadvantages, essentially making the problem an offline problem,
while, for $\alpha = 1$, reserving an item is not better than discarding
it because packing a reserved item does not add anything to the gain.

This extension to the knapsack problem is arguably quite natural: Consider
somebody trying to book a flight with several intermediate stops. Since
flight prices are subject to quick price changes, it might be necessary to
invest some reservation costs even for some flights not ending up in the
final journey. The knapsack problem with reservations might also be seen as
a simple model for investing in a volatile stock market, where a specific
type of derivates is available (at some cost) that allows the investor to
fix the price of some stock for a limited period of time.

One of the key properties of our reservation model applied to the simple
knapsack problem is its unintuitive behavior with respect to rejecting items.
It shows sharp thresholds for the competitive ratio at seemingly
arbitrary points: For some interval of low reservation charges, the competitive
ratio is not affected by the charge at all, on the next interval, it follows a 
nonlinear function, penultimately it depends
linearly on the charge, and on the last interval, the competitive ratio grows
even faster. This behavior is further discussed in the following subsection.

The \OKP has been extensively studied under many other
variations, including buffers of constant size in which items may be stored
before deciding whether to pack them, studied by Han et
al.~\cite{HanKMY19}. Here, the authors allow for a buffer of size at least
the knapsack capacity into which the items 
that are presented may or may not be packed and from which a subset is then
packed into the actual knapsack in the last step. They extensively study the
case in which items may be irrevocably removed from the buffer. Our model can be
understood as having an optional buffer of infinite size, with the caveat that
each buffered item induces a cost. A variant without an additional buffer, but
with the option to remove items from the knapsack at a later point was studied 
by Iwama et al.~\cite{IwamaT02}. The same model with costs for each removal 
that are proportional to a fraction $f$ of each item from
the knapsack was researched
by Han et al.~\cite{HanKM14}, which is closer to our model. Their model allows
an algorithm to remove items that were already packed into the knapsack,
for a proportion of the value of these items. However, we 
do not know of any simple reduction from one model to the other, which is 
supported by the considerably different behavior of the competitive ratio 
relative to the reservation cost. For \OKP, Han et al. show that the
problem is 2-competitive for a removal cost factor $f \leq 0.5$ and becomes
$(1+f+\sqrt{f^2+2f+5})/2$-competitive for $f > 0.5$.

Other models of the online (simple) knapsack problem have been studied as well,
such as the behavior of the resource-augmentation model, where the knapsack of the online algorithm is
slightly larger than the one of the offline algorithm on the same instances,
studied by Iwama et al.~\cite{IwamaZ10}. 
When 
randomization is allowed, the \OKP becomes 2-competitive
using only a single random bit and does not increase its competitive ratio with
any additional bits of randomization \cite{BockenhauerKKR14}. A relatively young
measure of the informational complexity of a problem is that of advice
complexity, introduced by Dobrev, Kr\'alovi\v{c}, and
Pardubsk\'{a}~\cite{DobrevKP09}, revised by Hromkovi\v{c} et al.~\cite{HromkovicKK10} and refined by Böckenhauer et
al.~\cite{BockenhauerKKKM17}. In the standard model, an online algorithm is
given an advice tape of bits that an oracle may use to convey some bits of
information to an online algorithm about the instance in order to improve its
performance. Not surprisingly, a single advice bit also improves the competitive
ratio to 2, but interestingly, any advice in the size of $o(\log n)$ advice bits
does not improve this ratio~\cite{BockenhauerKKR14}.

\subsection{Our Contributions}
We study the behavior of the knapsack problem with reservation costs
for a reservation factor $0 < \alpha <
1$. We analyze four subintervals for $\alpha$ separately with
borders at $1/4$, $\sqrt{2}-1$ and $\phi-1\approx 0.618$,
where $\phi$ is the golden ratio.
The bounds that we are providing, which are
illustrated in Figure \ref{fig:plot}, can be
found in Table \ref{tab:compratio}.  We prove matching upper and lower bounds
for each interval.

\begin{table}[t]
\caption{Competitive ratios proven in this work.}
\label{tab:compratio}
\centering

\begin{tabular}{r l c l l}
    \hline
    \multicolumn{2}{c}{$\alpha$}&Competitive ratio&Lower bound&Upper bound\\
    \hline
    $0\;<\;$&\hspace*{-4mm}$\alpha\;\le\;0.25$&$2$&Thm.~\ref{lb:smallalpha} &Thm.~\ref{thm:ubtiny}\\
    $0.25\;<\;$&\hspace*{-4mm}$\alpha\;\le\;\sqrt{2}-1$&$\frac{1+\sqrt{5-4\alpha}}{2(1-\alpha)}$&Thm.~\ref{lb:smallalphabetter}&Thm.~\ref{thm:ubtiny}\\
    $\sqrt{2}-1\;<\;$&\hspace*{-4mm}$\alpha\;<\;\phi-1$&$2+\alpha$&Thm.~\ref{lb:smallalphabetter} &Thm.~\ref{thm:ubsmall},\;\ref{thm:ubsmall2}\\
    $\phi-1\;\le\;$&\hspace*{-4mm}$\alpha\;<\;1$&$\frac{1}{1-\alpha}$&Thm.~\ref{lb:smallalphabetter} &Thm.~\ref{thm:ublarge}\\
    \hline
\end{tabular}
\end{table}

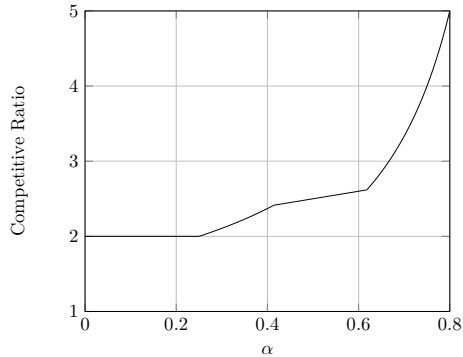
\begin{figure}
\centering
\hspace*{-6em}
  \begin{tikzpicture}[scale=0.7]
    \begin{axis}[ ymin=1,xmin=0,xmax=0.8,ymax=5, xlabel=$\alpha$, ylabel=Competitive Ratio, grid=major]
        \addplot[name path=lb1, domain=0:0.25] {2};
        \addplot[name path=lb4, domain=0.25:0.414] {(1+sqrt(5-4*x))/(2-2*x)};

        \addplot[name path=lb5, domain=sqrt(2)-1:sqrt(5)/2-1/2] {2+x};

        \addplot[name path=lb6, domain=sqrt(5)/2-1/2:0.8] {1/(1-x)};
    \end{axis}
  \end{tikzpicture} 

\caption{A schematic plot of the relation between reservation costs and the resulting competitive ratio, which is proven in this work.}
\label{fig:plot}
\end{figure}

We also take a look at a subclass of algorithms for this problem
that never discard presented items up to a point where they 
stop processing the rest of the input, with arguably paradoxical
results:  One would expect that, for very small reservation
costs, reserving items instead of rejecting them right away is more helpful
than doing the same with large reservation costs and that rejecting becomes
more and more helpful with increasing reservation costs.  But we prove that,
while, for small values of $\alpha$, every algorithm that is unable
to reject items is strictly dominated by algorithms that are able to
reject items, for larger values of $\alpha$ this is not the
case, with our algorithms used for $\alpha\ge \sqrt{2}-1$
being nonrejecting and matching their lower bounds. 

We cannot give a definitive answer on why this behavior can be observed. Our
intuition is that, for higher reservation costs, the behavior is more similar to
the model without reservations. In this classical model, any errors that occur
by ultimately not packing an item are punished severely, so while reserving
everything is costly, it is not worse than rejecting any items.

The remainder of this paper is structured as follows: In
Section~\ref{sec:ub}, we present the upper bounds for all intervals.
We divide this section by first presenting a more general algorithm
that provides us with the wanted upper bounds for values of $\alpha$ up to
$\sqrt{2}-1$ and continue with much simpler algorithms for values of
$\alpha$ larger than $\sqrt{2}-1$.  In Section~\ref{sec:lb}, we look at matching
lower bounds, with a specialized lower bound for values of $\alpha$
smaller than $0.25$ and a uniform approach for all higher values.
Section \ref{sec:nonrej} is devoted to a discussion of algorithms that
are unable to reject items. We conclude the paper in
Section~\ref{sec:conclusion}.

\subsection{Preliminaries}

Our model can be defined as follows.
Consider a knapsack of size~1 and a \emph{reservation factor} $\alpha$,
with $0\le \alpha\le 1$. 
A \emph{request sequence} $I$ is a list of item sizes
$x_1,x_2,\ldots,x_n$ with $x_i\leq1$ for $1\le i\le n$,
which arrive sequentially.
At time step~$i$, the knapsack is filled up to some size $t_i\le 1$
and the reserved items add up to size~$r_i$.
When an item with size $x_i$ arrives, an online algorithm
may \emph{pack} this item into the knapsack if $x_i+t_i\le 1$,
it may also \emph{reject} the item,
or \emph{reserve} the item at cost $\alpha\cdot x_i$.
  
At step $n+1$, no new items arrive and the knapsack contains all items
that were taken with total size~$t_n$ and the reserved items have size~$r_n=R$. 
An algorithm can additionally pack any of the reserved items which still fit
  into the knapsack, up to some size $t\le1$. 
  The \emph{gain} of an algorithm \alg solving \RKP with a reservation factor $\alpha$ on an instance $I$ is 
  $\text{gain}_{\alg}(I)= t - \alpha\cdot R$.

The \emph{strict competitive ratio} of an algorithm \alg on an instance
$I$ is, given a solution with optimal gain $\text{gain}_{\opt}(I)$ on this instance,
$\compr_{\alg}(I)={\text{gain}_{\opt}(I)}/{\text{gain}_{\alg}(I)}$. 
The general strict competitive ratio of an algorithm \alg is taken as the worst case over all possible instances,
$\compr_{\alg}=\max_{I}\{\compr_{\alg}(I)\}$.

  The strict competitive ratio as defined above is a special case of the
  well-known \emph{competitive ratio} which relaxes the above definition by
  allowing for a constant additive term in the definition. Note that this
  generalized definition only makes sense for online problems in which the
  optimal solution has unbounded gain. Since the gain of an optimal solution
  for \RKP is bounded by the knapsack capacity, we only work with
  strict competitive ratios in this paper and will simply call it
  \emph{competitive ratio} from now on. For a thorough introduction to
  competitive analysis of online algorithms, see the textbooks by Borodin
  and El-Yaniv \cite{0097013} and by Komm \cite{Komm16}.

\section{Upper Bounds}\label{sec:ub}

We start by presenting an algorithm that gives us bounds for smaller
values of $\alpha$. While the algorithm that is presented in the
following subsection does not diverge from the lower bounds for values
of $\alpha$ bigger than $\sqrt{2}-1$, we handle the two function
segments above this threshold seperately, using much simpler
algorithms. This is done to show a curious observation that is
further discussed in section~\ref{sec:nonrej}. This observation states that an
algorithm that tightly matches the lower bound for these segments no
longer has to reject any items in order to be optimal.

\subsection{Upper Bound for $0<\alpha< \sqrt{2}-1$} 

We present Algorithm~\ref{alg:alphatinyupperb} and first look at the
competitive ratio after each of the stop statements is reached. The
interesting part of the analysis is then done in
Theorem~\ref{thm:ubtiny}, which deals with the case that none of the
stop statements are reached during the run of the algorithm.

We define
$\mu=1/(\rho(1-\alpha))$, where we will choose $\rho$ as our claimed
competitive ratio, depending on the area of values for $\alpha$ that
we want to analyze. We keep track of the reserved items in the set
$R_s$. We denote the largest packing of a set $I$ of items that is of
size at most 1 by $\ensuremath{POPT}$.

The algorithm is split up into several parts, while a reservation is
only made in line~\ref{line:reserve} and if one criterion is met:
If the gain from the current item
plus the gain of everything reserved so far is still below our wanted
gain, we reserve the item.  This especially means that the current
item and the gain from our reserve itself are both smaller than
$1/\rho$ if this condition triggers.

If we do not reserve, we know that the new item together with the gain
of our reserve exceeds our wanted gain. It could however be the case
that this sum also exceeds 1 and thus does not constitute a valid
packing.  Thus, we pack in line~\ref{line:simplyfits} if the sum is\
smaller than 1 and are done or
we continue otherwise.

We also know that $R(1-\alpha) < 1/\rho$, and thus $R <
1/(\rho(1-\alpha)) = \mu$ always holds, as we do not reserve items
that would cause a violation of this condition. This, together with
the knowledge that $x_k$ is too big to pack it together with $R$
lets us derive that $x_k > 1-\mu$. Thus, if all other items from $R_s$
are smaller than $1-\mu$, we can simply remove them iteratively from
our knapsack until $x_k + R \leq 1$ holds. This leads to our wanted
result in line~\ref{line:allitemssmall}, as we know that the
reservation costs up to this point are at
most $\mu\alpha$ (since $R < \mu$), our final packing is of size at
least $\mu$ and thus our gain is $\mu - \mu\alpha = 1/\rho$.

This leaves us with a corner case in which the previous conditions
hold, except that there is at least one other item in $R_s$ which is
bigger than $1-\mu$. We then simply calculate in
line~\ref{line:tryallpackings} if we are able to
achieve our wanted competitive ratio using the new item $x_k$ anyway
and reject the item in line~\ref{line:reject} otherwise.

Thus, we are only left to analyze the competitive ratio of the
algorithm in the case that none of the \emph{stop} statements
triggered at the end of the instance, which we do in the
following theorem.

\begin{algorithm}[tb]
  \begin{algorithmic}[1]
  \State{$R:=0$;}
  \State{$R_s:=\emptyset$;}
  \For{$k=1,\ldots,n$}
    \If{$x_k + R(1-\alpha) < 1/\rho$} {$R:=R+x_k$, $R_s := R_s \cup \{x_k\}$}\label{line:reserve}
      \ElsIf{$x_k+R \leq 1$} {pack optimally and stop}\label{line:simplyfits}
      \ElsIf{$\forall j \in R_s:x_j \leq 1-\mu$} {pack optimally and stop}\label{line:allitemssmall}
      \ElsIf{$\ensuremath{POPT}(x_k \cup R_s)-R\alpha \geq 1/\rho$} {pack optimally and stop}\label{line:tryallpackings}
      \Else{ reject $x_k$}\label{line:reject}
    \EndIf
  \EndFor
  \State{pack the reserved items optimally}\label{line:packatend}
  \end{algorithmic}
  \caption{Competitive ratio of $2$ for $\alpha \leq 0.25$ and of $\frac{1+\sqrt{5-4\alpha}}{2(1-\alpha)}$ for $\alpha \leq \sqrt{2}-1$}
  \label{alg:alphatinyupperb}
\end{algorithm}

\begin{theorem}\label{thm:ubtiny}
    There is an algorithm for \RKP with a competitive ratio of $\max
    \{\frac{1+\sqrt{5-4\alpha}}{2(1-\alpha)}, 2\}$ if $0<\alpha< \sqrt{2}-1$.
\end{theorem}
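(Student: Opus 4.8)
The plan is to reduce the statement to one clean inequality and then prove that. By the discussion preceding the theorem, whenever one of the stop statements in lines~\ref{line:simplyfits}, \ref{line:allitemssmall}, or~\ref{line:tryallpackings} is reached the algorithm attains gain at least $1/\rho$ against an optimum of at most~$1$, so I may assume the run reaches line~\ref{line:packatend}; then every item was either reserved or rejected. Write $\rho=\max\{\frac{1+\sqrt{5-4\alpha}}{2(1-\alpha)},2\}$ for the claimed ratio and recall $\mu=1/(\rho(1-\alpha))$. Since we never reserve an item violating $R(1-\alpha)<1/\rho$, we keep $R<\mu\le1$ throughout, so at the end all reserved items fit simultaneously, $\mathrm{POPT}(R_s)=R$, and $\text{gain}_{\alg}(I)=R(1-\alpha)$. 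Because $\rho(1-\alpha)=1/\mu$, the bound $\compr_{\alg}(I)\le\rho$ is equivalent to
\begin{equation*}
R\;\ge\;\mu\cdot\text{gain}_{\opt}(I)\;=\;\mu\cdot\mathrm{POPT}(\{x_1,\dots,x_n\}).
\end{equation*}
This is the inequality I would prove. Before attacking it I record the defining identity $(1-\alpha)\rho^2=\rho+1$ (it holds with equality on the nonlinear branch, and for $\alpha\le1/4$, where $\rho=2$, as $(1-\alpha)\rho^2\ge\rho+1$, which is the direction needed below), together with its consequences $\mu=\rho/(\rho+1)$, hence $1-\mu=1/(\rho+1)$, and $\rho(1-\alpha)-1=1/\rho$.

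First the easy case: if an optimal packing uses only reserved items, then $\text{gain}_{\opt}(I)\le R$ and the inequality follows from $\mu\le1$. So assume an optimal set $O$ contains at least one rejected item. I extract two facts from the rejection tests, writing $R_k$ and $R_s^{(k)}$ for the reserved total and reserved set when item $x_k$ is processed. From the failure of line~\ref{line:simplyfits} ($x_k+R_k>1$) and $R_k<\mu$, every rejected item satisfies $x_k>1-\mu=1/(\rho+1)$; in particular $O$ can contain only boundedly many rejected items. From the failure of line~\ref{line:tryallpackings} we have $\mathrm{POPT}(\{x_k\}\cup R_s^{(k)})<1/\rho+R_k\alpha$; since $\mathrm{POPT}\ge x_k$, this gives $x_k<1/\rho+R_k\alpha$, and combined with $x_k>1-R_k$ it yields, via the identity, $R_k>\mu/(1+\mu)$ at every rejection.

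For the main estimate I take $x_k$ to be the rejected item of $O$ rejected last. The reserved items of $O$ reserved before step~$k$, together with $x_k$, form a subpacking available at step~$k$, so line~\ref{line:tryallpackings} bounds their total by $1/\rho+R_k\alpha$; the reserved items of $O$ reserved after step~$k$ contribute at most $R-R_k$. If $x_k$ is the only rejected item of $O$, this gives $\text{gain}_{\opt}(I)\le 1/\rho+R_k\alpha+(R-R_k)$, and the target $R\ge\mu\,\text{gain}_{\opt}(I)$ reduces, using $\rho(1-\alpha)-1=1/\rho$, exactly to $R_k\ge\mu(1-R)$, which is implied by $R_k>\mu/(1+\mu)$ together with $R_k\le R<\mu$.

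The main obstacle is the case where $O$ contains several rejected items: line~\ref{line:tryallpackings} applied at $x_k$ controls $x_k$ together with the reserve but is blind to the rejected items of $O$ removed earlier, so bounding their sizes naively (each only by $1/\rho+R\alpha$) is far too lossy and in fact fails. The saving feature is the coupling hidden in $x_k>1-R_k$: a rejected item can be small enough to share an optimal packing with other items only when the reserve $R$ is already large, and a large $R$ is precisely what makes our own gain $R(1-\alpha)$ large. Turning this into a bound means combining the per-rejection inequalities of all rejected items of $O$ with the feasibility constraint $\text{gain}_{\opt}(I)\le1$ and the size lower bound $x_{k_i}>1-R$, so as to charge each additional rejected item in $O$ against a corresponding increase of $R$; the arithmetic closes only because of the identity $(1-\alpha)\rho^2=\rho+1$, which is where the interval endpoints $1/4$ and $\sqrt2-1$ enter. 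I expect this multi-item bookkeeping, rather than any single inequality, to be the technical heart of the proof.
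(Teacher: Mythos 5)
There is a genuine gap, and you have located it yourself: the case where the optimal solution contains two or more rejected items is left unresolved, with only a sketch of ``multi-item bookkeeping'' that you expect to be the technical heart. The paper has no such bookkeeping, because it proves this case is \emph{vacuous}, and that structural fact is the one idea your proposal is missing. Concretely, the paper first shows that the reserve can never contain two items of size at least $1-\mu$: once one such item has been reserved, a second item $x_k\ge 1-\mu$ would satisfy $x_k+R(1-\alpha)\ge(1-\mu)(2-\alpha)\ge 1/\rho$ and hence fail the reservation test in line~\ref{line:reserve}. Consequently, when an item $x_k$ is rejected, the reserve contains exactly one ``big'' item $x_j\ge 1-\mu$, and one can argue that $x_k+x_j>1$ must hold (if $x_k+x_j\le1$, packing $x_k$, $x_j$ and the small reserved items greedily either packs everything, contradicting the failure of line~\ref{line:simplyfits}, or fills the knapsack beyond $\mu=1/\rho+\mu\alpha>1/\rho+R\alpha$, so line~\ref{line:tryallpackings} would have triggered). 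Since $x_j$ itself passed the reservation test when it arrived, $x_j<1/\rho\le1/2$, so every rejected item has size larger than $1-1/\rho\ge1/2$. Therefore any feasible packing --- in particular the optimal one --- contains at most one rejected item, and the case you defer simply does not occur.

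Your bound $x_k>1-\mu=1/(\rho+1)$ is too weak to see this: for $\rho$ close to $2$ it permits rejected items of size roughly $1/3$, hence two of them in an optimal packing, and no charging scheme built from your per-rejection inequalities can close that case, because what is needed is the sharper size bound itself, not a sum of the weaker ones. The rest of your proposal is sound and in places cleaner than the paper's computation: the reduction to $R\ge\mu\cdot\text{gain}_{\opt}(I)$, the identity $(1-\alpha)\rho^2\ge\rho+1$ with equality on the nonlinear branch, and the single-rejected-item case via $R_k\ge\mu(1-R)$ all check out (the paper instead tracks the quantities $a$, $b$ and $x_j$ explicitly through a chain of estimates, singling out the unique rejected item of the optimum). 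So the fix is not more bookkeeping but one structural lemma: prove the two claims above, conclude that the optimum contains at most one rejected item, and then your existing single-item argument completes the proof.
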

\begin{proof}
    In order to show this property, we analyze
    Algorithm~\ref{alg:alphatinyupperb} and show that it provides us
    with the wanted upper bound.  We already saw in the description of the
    algorithm that in all of the \emph{stop} statements, the claim
    of the theorem holds if we set $\rho$ to our wanted
    competitive ratios. We are thus left with analyzing the
    performance of the algorithm in the case that none of the
    \emph{stop} statements are reached during its run.

    We first argue that, whenever a new item $x_k$ is rejected, the
    reserve contains exactly one item $x_i$ of size at least $1-\mu$.
    We already know that for $x_k$ to be rejected, the \emph{stop} condition in
    line~\ref{line:allitemssmall} did not trigger, which tells us that, in
    line~\ref{line:reject}, there is at least one item bigger than $1-\mu$
    in our reserve already.
    We now look at the time where the first item of size at least $1-\mu$ has
    already been reserved and assume that $x_k$ is of size at least $1-\mu$ as well.
    Then we do not reserve this item since
    \begin{align*}
        x_k + R(1-\alpha) &\geq (1-\mu) + (1-\mu)(1-\alpha) \\
                             &= \biggl( 1-\frac{1}{\rho(1-\alpha)}\biggr) (2-\alpha)\\
                             &\geq 1/\rho
    \end{align*}
    where the last inequality holds for $\rho = 2$ and all $\alpha \leq
    1-1/\sqrt{2} \approx 0.293$, in particular for $\alpha \leq 0.25$. The
    inequality remains true for $\rho = \frac{1 +
    \sqrt{5-4\alpha}}{2(1-\alpha)}$ and any $0 < \alpha < 1$, thus in particular
    for the interval considered in the theorem statement.

    Next, we argue that any item that is rejected is of size larger than $1/2$.
    This can be easily verified using the fact that the \emph{stop} statement in
    line~\ref{line:simplyfits} did not trigger, and thus $x_k > 1-R$ holds.
    Since we additionally know that $R$ contains at least one item of size at
    least $1-\mu$, it has to hold that $x_k > \mu$, which is bigger than $1/2$
    for $0 < \alpha < \sqrt{2} - 1$.
    
    This also means that any optimal solution
    can contain at most one of the items that the algorithm discards.
    Thus, we can ignore any items of an instance that both our
    algorithm and an optimal algorithm discard and fix one item of
    size larger than $1/2$ (there can only be one in any
    valid solution) as the unique item that is part of the optimal
    solution, but discarded by our algorithm.

    Let now $x_i$ be the unique item that the algorithm rejects, but that is part
    of the optimal solution and let $x_j$ be the largest reserved item of the
    algorithm at the time that $x_i$ is rejected.  As we discard
    $x_i$, we know by the conditionals that we passed and that were not
    triggered that both $x_i \leq \ensuremath{POPT}(x_k \cup R_s) \leq 1/\rho + R\alpha$ and
    $x_i + x_j > 1$ have to hold.
    We also know that the only rejected item from the optimal solution is $x_i$
    and that every other item of this solution was reserved by our algorithm.

    Let $b$ be the total size of all items reserved by our algorithm before $x_j$ arrived. On the other
    hand, let $a$ be the total size of all items reserved by our algorithm after
    item $x_i$ is rejected. At the moment that $x_i$ is rejected, we know by line~\ref{line:tryallpackings} that
    $x_i \leq \ensuremath{POPT}(x_k \cup R_s) \leq 1/\rho + (x_j + b)\alpha$. We use this term
    between steps (\ref{eq:firststep}) and (\ref{eq:secondstep}) of the following calculation. Together with $x_i + x_j
    > 1$, we can derive $1 - x_j < x_i \leq 1/\rho + (x_j + b)\alpha$ which
    solved for $x_j$ yields $\frac{1-1/\rho-b\alpha}{1+\alpha}\leq x_j$. This is
    used between steps (\ref{eq:seventhstep}) and (\ref{eq:eigthstep}) of the following inequation chain. We can now derive the
    following. To summarize, the optimal algorithm has a gain of $x_i + b + a$.
    Algorithm~\ref{alg:alphatinyupperb} also has a gain of
    $a+b$, but packs $x_j$ instead of $x_i$ and has to pay reservation
    costs of $\alpha$ for $a$, $b$ and $x_i$.  Hence, we have 

    \begin{align}
        \frac{gain_\opt}{gain_\alg} &\leq \frac{x_i + b + a}{(1-\alpha)(x_j + b + a)}& \text{Use: $x_i \leq 1/\rho + (x_j+b)\alpha$}\label{eq:firststep}\\
                          &\leq \frac{1}{1-\alpha}\cdot\frac{1/\rho+(x_j + b)\alpha + b + a}{x_j + b + a}& \text{Add $a\cdot\alpha$ to numerator}\label{eq:secondstep}\\
                          &\leq \frac{1}{1-\alpha}\cdot\frac{1/\rho+x_j\alpha + (b + a)(1+\alpha)}{x_j + b + a}& \text{Take $1+\alpha$ as common factor}\label{eq:thirdstep}\\
                          &= \frac{1+\alpha}{1-\alpha}\cdot\frac{\frac{1}{\rho(1+\alpha)}+x_j \frac{\alpha}{1+\alpha} + b + a}{x_j + b + a}.& \label{eq:fourthstep}
    \end{align}

    If the fraction to the right of $\frac{1+\alpha}{1-\alpha}$ is at most 1, we are done: It holds
    that $\frac{1+\alpha}{1-\alpha} \leq 2$ for $\alpha \leq 1/3$ and
    also $\frac{1+\alpha}{1-\alpha} \leq
    \frac{1+\sqrt{5-4\alpha}}{2(1-\alpha)}$ for $\alpha \leq
    \sqrt{2}-1$. Otherwise, we use that removing the term $a+b$ from
    both the numerator and the denominator can only increase the value
    of the right fraction and continue as follows.

    \begin{align}
        \frac{gain_\opt}{gain_\alg} &\leq \frac{1+\alpha}{1-\alpha}\cdot\frac{\frac{1}{\rho(1+\alpha)}+x_j\frac{\alpha}{1+\alpha}}{x_j}& \text{Simplify equation}\\
                          &= \frac{\frac{1}{\rho x_j}+\alpha}{1-\alpha}& \text{Use $x_j \geq \frac{1-1/\rho+b\alpha}{1+\alpha}$}\label{eq:seventhstep}\\
                          &\leq \frac{\frac{1+\alpha}{\rho -1 +\rho b \alpha}+\alpha}{1-\alpha}& \text{Remove $\rho b\alpha$}\label{eq:eigthstep}\\
                          &\leq \frac{\frac{1+\alpha}{\rho -1}+\alpha}{1-\alpha}& \text{Replace $\rho$ by 2 ($\rho \geq 2$)}\\
                          &\leq\frac{1+2\alpha}{1- \alpha}& 
    \end{align}
    This gives us our wanted competitive ratio of at most $\rho=2$ for
    $\alpha \le 0.25$ and likewise of 
    $\rho=\frac{1+\sqrt{5-4\alpha}}{2(1-\alpha)}$ for $0.25 \leq \alpha
    \leq \sqrt{2}-1$.
\end{proof}
\subsection{Upper Bound for $\frac16<\alpha< \phi -1$}

We now prove a general upper bound for $\alpha<\phi -1$. We split the
proof, into two pieces: Theorem~\ref{thm:ubsmall} handles the case for
values of $\alpha$ up to $\frac12$, . Theorem~\ref{thm:ubsmall2}
contains an induction over the number of large elements in an
instance, which proves the upper bound for the rest of the interval up
to $\phi-1$.

\begin{theorem}\label{thm:ubsmall}
There exists an algorithm for \RKP with a  competitive ratio of at most $2+\alpha$ if $0<\alpha\le\frac12$.
\end{theorem}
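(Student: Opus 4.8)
The plan is to analyze a single nonrejecting algorithm obtained from Algorithm~\ref{alg:alphatinyupperb} by keeping its reservation rule but dropping the rejection branch: with $\rho=2+\alpha$ and $\mu=1/(\rho(1-\alpha))$, I reserve the incoming item $x_k$ as long as $x_k+R(1-\alpha)<1/\rho$; the first time this condition fails I pack the most valuable feasible subset of $R_s\cup\{x_k\}$ and stop; and if the request sequence ends while the algorithm is still reserving, I pack $R_s$ optimally. By construction the run maintains the invariants $R(1-\alpha)<1/\rho$, hence $R<\mu$, and every reserved item has size below $1/\rho$. Before the case analysis I would record two inequalities that both reduce to $\alpha^2+\alpha-1\le 0$ and therefore hold on the entire range $\alpha\le\phi-1$, in particular for $\alpha\le\tfrac12$: namely $\mu\le 1$ and $1/(1-\alpha)\le 2+\alpha$. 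Since $\text{gain}_{\opt}\le 1$ always, it suffices to show that the algorithm either finishes with gain at least $1/\rho$, or finishes on an instance whose optimum is small enough to absorb a weaker gain.

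Two of the three ways the run can end are immediate. If the algorithm stops because $x_k$ fits alongside the whole reserve, i.e.\ $x_k+R\le 1$, then it packs $x_k$ together with $R_s$ for a gain of $x_k+R-\alpha R=x_k+R(1-\alpha)\ge 1/\rho$, the last step being exactly the failure of the reservation condition; as this gain is at least $1/\rho$ and $\text{gain}_{\opt}\le 1$, the ratio is at most $\rho$ no matter which items still follow. If instead the stream ends with everything reserved, then $R<\mu\le 1$ ensures that all reserved items fit at once, so the algorithm packs them for gain $R(1-\alpha)$ while the offline optimum is exactly $R$, giving ratio $1/(1-\alpha)\le 2+\alpha$.

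The remaining case, where the algorithm stops because $x_k$ does not fit with the full reserve ($x_k+R>1$, which forces $x_k>1-\mu$), is the one I expect to be the main obstacle. When every reserved item has size at most $1-\mu$, I would mimic the removal argument behind line~\ref{line:allitemssmall} of Algorithm~\ref{alg:alphatinyupperb}: discard reserved items one by one until $x_k$ fits, so that the final packing exceeds $\mu$ while the paid reservation cost stays below $\alpha\mu$, yielding gain above $\mu(1-\alpha)=1/\rho$. The genuinely delicate subcase is that in which some reserved item itself exceeds $1-\mu$; then the simple removal argument no longer guarantees a packing above $\mu$, and one must instead select a subset of $R_s\cup\{x_k\}$ more carefully and show that its gain net of the reservation cost still reaches $1/\rho$. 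I expect the hypothesis $\alpha\le\tfrac12$ to be exactly what makes this estimate go through, since it simultaneously bounds $\mu$ and the total reservation cost $\alpha R<\alpha\mu$; the fact that this single-shot packing argument degrades for larger $\alpha$ is presumably why the regime $\tfrac12<\alpha<\phi-1$ is separated off into Theorem~\ref{thm:ubsmall2} and treated there by an induction on the number of large items.
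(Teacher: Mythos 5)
Your setup is sound and coincides with the paper's: your ``reservation rule without rejection'' is exactly Algorithm~\ref{fig:Asub}, your invariant $R<\mu=1/((2+\alpha)(1-\alpha))$ is Lemma~\ref{lem:res_size}, your observation that a packing of size at least $\mu$ yields gain at least $1/(2+\alpha)$ is Lemma~\ref{lem:minpack}, and your three easy terminations (packing triggered with $x_k+R\le 1$; sequence ends with everything reserved; all reserved items of size at most $1-\mu$) are handled correctly. But the proof stops exactly where the theorem's difficulty begins. The subcase you call ``genuinely delicate''---the packing is triggered, $x_k+R>1$, and some reserved item exceeds $1-\mu$---is not an edge case to be patched; it is the entire content of the paper's proof, and you offer for it only the expectation that $\alpha\le\tfrac12$ ``makes the estimate go through.'' That hope, as stated, has no mechanism behind it: bounding $\mu$ and $\alpha R$ alone cannot show the net gain reaches $1/(2+\alpha)$, because without further structure the feasible packings of $R_s\cup\{x_k\}$ could all be far below $\mu$.

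What actually closes this case in the paper is a structural fact you never isolate: the reservation condition itself limits how many \emph{large} items (size above $1-\mu$) can be reserved before the packing triggers. Lemma~\ref{lem:largeitems} shows there is at most one large item when $0<\alpha\le\fourfourfive\approx 0.445$ and at most two when $\fourfourfive<\alpha\le\tfrac12$ (this threshold, the root of $1-2\alpha-\alpha^2+\alpha^3$, is invisible in your sketch), and Lemma~\ref{lem:order} lets one reorder the reserved prefix decreasingly so that the large items are $x_1$ (and possibly $x_2$). With that in hand, the paper runs an explicit case analysis---$x_1+x_k\le 1$; $x_1+x_k>1$ with $x_2+\cdots+x_k$ large; $x_1+x_2+x_k>1$ but $x_1+x_k\le 1$; $x_2+x_k>1$; etc.---and in each case a direct computation using $x_k>1-x_1$ (or $1-x_1-x_2$), $x_1<1/(2+\alpha)$, and $x_2<1/((2+\alpha)(2-\alpha))$ shows the gain exceeds $1/(2+\alpha)$. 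The hypothesis $\alpha\le\tfrac12$ enters not where you guess, but twice in this machinery: it caps the number of large items at two, and it makes the final arithmetic inequality (e.g.\ $3(1-\alpha)/((2+\alpha)(2-\alpha))\ge 1/(2+\alpha)$) hold. Until you supply an argument of this kind---a bound on the number of blocking items plus the case-by-case gain estimates---the proposal is an outline with its central step missing, not a proof.
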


\begin{theorem}\label{thm:ubsmall2}
There exists an algorithm for \RKP with a  competitive ratio of at most $2+\alpha$ if $\frac12<\alpha <\phi-1$.
\end{theorem}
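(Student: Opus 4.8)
The plan is to analyze the same nonrejecting reserve-and-commit algorithm that establishes Theorem~\ref{thm:ubsmall}, now on the larger interval $\tfrac12<\alpha<\phi-1$, with target ratio $\rho=2+\alpha$, threshold $1/\rho$, and $\mu=1/(\rho(1-\alpha))$ as in Algorithm~\ref{alg:alphatinyupperb}. The first step is to dispose of every run in which the algorithm \emph{commits}, i.e.\ packs a subset of its reserve together with the current item and halts: by construction the committed net gain is at least $1/\rho$, and since $\mathrm{gain}_{\opt}\le 1$ this already gives $\compr_{\alg}\le\rho=2+\alpha$. A useful observation to record at the outset is that on this whole interval $\mu\le 1$, because $\mu\le 1\iff(2+\alpha)(1-\alpha)\ge 1\iff\alpha^2+\alpha-1\le 0\iff\alpha\le\phi-1$; hence the entire reserve always fits into the knapsack simultaneously, so the end-of-instance packing in line~\ref{line:packatend} can take \emph{all} reserved items and earns exactly $R(1-\alpha)$. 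This is precisely the place where the hypothesis $\alpha<\phi-1$ enters, and it is the same elementary inequality $\tfrac1{1-\alpha}\le 2+\alpha$ that makes reserving a single item and packing it at the end already ratio-feasible.

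All the remaining difficulty is concentrated in the case where the instance terminates without any commitment, and here I would call an item \emph{large} if its size exceeds $\tfrac12$ and run an induction on the number $\ell$ of large items, using throughout that every feasible packing, both the optimum's and the algorithm's, contains at most one large item. For the base case $\ell\le 1$ I would argue directly: if $\ell=0$ then all items are at most $\tfrac12$, the reserved small items pack to within one small item of being full, and combining $\mathrm{gain}_{\alg}\ge R(1-\alpha)$ with the bound $\alpha\mu\le 1-1/\rho$ (again equivalent to $\alpha\le\phi-1$) shows the net gain suffices; if $\ell=1$ the single large item either triggers a commitment (already handled) or is reserved and packed at the end, where the estimate $\mathrm{gain}_{\opt}/\mathrm{gain}_{\alg}\le \tfrac1{1-\alpha}\le 2+\alpha$ closes the case.

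For the inductive step I would isolate the globally largest item $y$ of the instance and split on whether the optimum uses it. If the optimum does not pack $y$, then deleting $y$ leaves an instance with $\ell-1$ large items on which the optimum is unchanged while the algorithm does at least as well (its reserve can only shrink), so the inductive hypothesis applies after checking that the algorithm is prefix/deletion-consistent, i.e.\ that its state on the reduced instance coincides with its state on the full one up to the reservation cost of $y$. If instead the optimum packs $y$, then $\mathrm{gain}_{\opt}=y+\sigma$ for some total $\sigma$ of small fillers, the algorithm reserves $y$ together with its small reserve (paying $\alpha$ on all of it), and I would compare its best final packing—$y$ topped up by small fillers to within $1/\rho$ of full—against $y+\sigma\le 1$, absorbing the reservation charge via $\alpha(1+\alpha)\le 1$.

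The main obstacle will be exactly this bookkeeping in the inductive step: the algorithm may pay the reservation cost for several large items while ultimately packing at most one of them, and one must charge these wasted costs against the gain of the single packed large item and its small fillers. The budget that makes this charging balance is precisely $\alpha^2+\alpha-1\le 0$, i.e.\ $\alpha\le\phi-1$, which is also why the interval closes at $\phi-1$ and the bound becomes tight there. A secondary technical point, which I expect to require care but not ingenuity, is verifying that the inductive hypothesis is legitimately applicable to the truncated subinstance, i.e.\ that removing the largest item does not alter the algorithm's committing/reserving decisions on the surviving items in a way that invalidates the recursion.
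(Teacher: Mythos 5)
There is a genuine gap, and it is located at the very first step of your plan: you dispose of every run in which Algorithm~\ref{fig:Asub} triggers a packing by asserting that ``by construction the committed net gain is at least $1/\rho$.'' This is not true by construction; it is precisely the statement that the whole theorem has to prove. The trigger condition $x_k+(1-\alpha)R\ge 1/(2+\alpha)$ only guarantees a net gain of at least $1/(2+\alpha)$ when $x_k+R\le 1$, i.e.\ when the current item and the \emph{entire} reserve fit together into the knapsack. When $x_k+R>1$, the algorithm must leave reserved items unpacked while still paying the reservation charge $\alpha$ on all of them, and bounding the resulting net gain is where all the work lies: the paper spends Lemmas~\ref{lem:minpack}, \ref{lem:order}, \ref{lem:smallelements} and an induction over the number of large elements (with a case distinction on whether $x_{k-1}$ belongs to an optimal packing, and whether deleting it still triggers the packing) exactly on this case. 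Conversely, the case you concentrate your induction on --- the instance ending with no commitment --- is trivial: by the argument of Lemma~\ref{lem:res_size} the reserve satisfies $R<1/((2+\alpha)(1-\alpha))\le 1$ for $\alpha\le\phi-1$, so all items are packed at the end, the optimum equals their total size, and the ratio is $1/(1-\alpha)\le 2+\alpha$. Your own ``useful observation'' already settles it; no induction on large items is needed there. In short, the proposal inverts where the difficulty sits, and the elaborate part of your argument addresses the easy case while the hard case is assumed.

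Two further signs of this inversion. First, you describe the main obstacle as charging the wasted reservation cost of several large items against the one large item actually packed --- but in a run with no commitment nothing is ever left unpacked (the whole reserve fits), so this phenomenon only occurs in the committed runs you claimed to have disposed of. Second, your parameters mix the two algorithms of the paper: $\mu=1/(\rho(1-\alpha))$ and the end-of-instance packing of line~\ref{line:packatend} belong to the rejecting Algorithm~\ref{alg:alphatinyupperb}, whose analysis (Theorem~\ref{thm:ubtiny}) breaks down above $\sqrt2-1$, whereas the nonrejecting algorithm behind Theorems~\ref{thm:ubsmall} and~\ref{thm:ubsmall2} is Algorithm~\ref{fig:Asub}, whose committed runs are the nontrivial ones. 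The deletion-consistency issue you flag as ``secondary'' is also more serious than you suggest --- removing the globally largest item changes which request triggers the packing --- which is why the paper instead removes $x_{k-1}$, the smallest large element placed immediately before the trigger via the reordering Lemma~\ref{lem:order}, and splits on whether the truncated sequence still triggers. As it stands, the proposal does not contain a proof of the statement.
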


\begin{algorithm}[tb]
\begin{algorithmic}
\State{$R:=0$;}
\For{$k=1,\ldots,n$}
  \If{$x_k + (1-\alpha) R \geq 1/(2+\alpha)$}
    \State{pack $x_1,\ldots,x_k$ optimally;}
    \State{stop}
  \Else
    \State{reserve $x_k$;}
    \State{$R:=R+x_k$}
  \EndIf
\EndFor
\State{pack $x_1,\ldots,x_n$ optimally}
\end{algorithmic}
\caption{Competitive ratio of $2+\alpha$ for
$0 < \alpha \leq \phi -1$}
\label{fig:Asub}
\end{algorithm}

We consider Algorithm~\ref{fig:Asub}, which, unlike
Algorithm~\ref{alg:alphatinyupperb}, does not reject any offered item 
until it reaches the desired gain and stops processing the rest of the input. In
Section~\ref{sec:nonrej}, we further discuss this class of algorithms 
and when they are optimal.
We need the following technical lemmas.

\begin{lemma}\label{lem:res_size}
  The size of $R$ in Algorithm~\ref{fig:Asub} is never larger than
  $1/{(2+\alpha)(1-\alpha)}$. 
\end{lemma}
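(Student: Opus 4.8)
The plan is to track how the reserve $R$ evolves during the execution of Algorithm~\ref{fig:Asub} and to show that each individual reservation keeps it below the claimed threshold $B := 1/((2+\alpha)(1-\alpha))$. The crucial observation is that $R$ is modified only in the \emph{else}-branch, namely when the current item $x_k$ fails to trigger the stop condition, i.e.\ when $x_k + (1-\alpha)R < 1/(2+\alpha)$; only then is $x_k$ reserved and $R$ increased to $R + x_k$. Writing $R$ for the value immediately before such a step, the failing stop condition gives $x_k < 1/(2+\alpha) - (1-\alpha)R$, and hence
\[
R + x_k < R + \frac{1}{2+\alpha} - (1-\alpha)R = \alpha R + \frac{1}{2+\alpha}.
\]
So each reservation replaces $R$ by a value strictly smaller than $f(R) := \alpha R + 1/(2+\alpha)$.

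Next I would identify $B$ as the fixed point of the affine map $f$. A one-line computation shows
\[
f(B) = \frac{\alpha}{(2+\alpha)(1-\alpha)} + \frac{1}{2+\alpha} = \frac{\alpha + (1-\alpha)}{(2+\alpha)(1-\alpha)} = B,
\]
and, since $0 < \alpha < 1$, the map $f$ is strictly increasing (its slope $\alpha$ is positive). These two facts together are exactly what is needed to close an induction.

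With these ingredients, the bound follows by induction over the sequence of reservations. Initially $R = 0 \le B$. If $R \le B$ holds before a reservation, then the new value is strictly less than $f(R)$, and by monotonicity $f(R) \le f(B) = B$; hence $R < B$ afterwards as well. Consequently $R \le B$ at every point of the execution, which is precisely the claim, since $R$ is left unchanged by both the stop-branch and the final packing step. As an alternative I would note that unrolling the recurrence yields the explicit estimate $R < \frac{1}{2+\alpha}\sum_{i \ge 0} \alpha^i = \frac{1}{(2+\alpha)(1-\alpha)} = B$ after any number of reservations, via the geometric series.

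The argument is essentially routine, so the only point deserving care is the fixed-point identity: the exact shape of the bound $B$ is what makes $f(B)=B$ hold and thus lets the induction close. The other subtlety is the monotonicity of $f$, which relies on $\alpha > 0$ and is what allows the passage from $R \le B$ to $f(R) \le f(B)$; I would state it explicitly rather than leave it implicit.
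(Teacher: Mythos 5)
Your proof is correct, but it is organized differently from the paper's. You turn the failing trigger condition into the recurrence $R_{\mathrm{new}} < \alpha R + \frac{1}{2+\alpha}$, identify $B = \frac{1}{(2+\alpha)(1-\alpha)}$ as the fixed point of the affine map $f(R)=\alpha R + \frac{1}{2+\alpha}$, and close an induction over the reservation steps using the monotonicity of $f$ (or, alternatively, sum the geometric series). The paper avoids induction entirely: it takes the same trigger-failure inequality $x + (1-\alpha)R' < \frac{1}{2+\alpha}$ for the last reserved item $x$, multiplies the \emph{new} reserve by $(1-\alpha)$, and uses the trivial bound $(1-\alpha)x \le x$ to get
\[
(1-\alpha)(R'+x) = (1-\alpha)x + (1-\alpha)R' \le x + (1-\alpha)R' < \frac{1}{2+\alpha},
\]
which yields $R'+x < B$ in one step, with no invariant needed on the old reserve $R'$; since $R$ is non-decreasing and this applies to the last reservation before packing is triggered, the claim follows. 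The trade-off: the paper's argument is shorter and needs no induction hypothesis, whereas your fixed-point view explains conceptually where the threshold $B$ comes from (it is exactly the fixed point of the reservation map) and transfers more readily to variants of the algorithm, e.g.\ it immediately gives the quantitative geometric-series estimate after $n$ reservations. Both proofs rest on the same core inequality, so the difference is one of packaging rather than substance.
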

\begin{proof}
  Let us consider a request sequence $x_1,\hdots, x_n$ for
  Algorithm~\ref{fig:Asub}, such that $x_k$ triggers the packing, \ie, $x_k +
  (1-\alpha) R \geq 1/(2+\alpha)$.
    
  We know that 
  $x_{k-1}+(1-\alpha)R'<\frac{1}{2+\alpha},$
  where $R$ is defined as $R=x_{k-1}+R'$. But $x_{k-1}\ge (1-\alpha)x_{k-1}$ which means that
  $(1-\alpha)R=(1-\alpha)x_{k-1}+(1-\alpha)R'\le x_{k-1}+(1-\alpha)R'<\frac{1}{2+\alpha}.$
  Thus,
  $R<\frac{1}{(2+\alpha)(1-\alpha)}.$
    
  Observe that this is, by construction, the largest possible value that $R$
  can take before the algorithm triggers the packing, thus, we have proved
  the claim.
\end{proof}

Note that, for every considered value of $\alpha$, the upper bound on $R$ is
positive, that is, $(2+\alpha)(1-\alpha)> 0$ if $\alpha$ is smaller than
1.  With Lemma~\ref{lem:res_size}, we can prove the following claim.

\begin{lemma}\label{lem:minpack}
  If Algorithm~\ref{fig:Asub} packs at least $1/{(2+\alpha)(1-\alpha)}$
  into the knapsack, then its competitive ratio is at
  most $2+\alpha$.
\end{lemma}
\begin{proof}
  We know that, if the algorithm packs at least
  $\frac{1}{(2+\alpha)(1-\alpha)}$, the gain of the algorithm, $\rm{gain}_{\alg}$, is
  at least $\frac{1}{(2+\alpha)(1-\alpha)}-\alpha R.$
  But, by Lemma~\ref{lem:res_size}, $R<1/(2+\alpha)(1-\alpha)$, which means
  that
  $\rm{gain}_{\alg}(I)\ge \frac{1}{(2+\alpha)(1-\alpha)}-
  \frac{\alpha}{(2+\alpha)(1-\alpha)}=\frac{1}{2+\alpha},$
  providing us with the desired upper bound on the competitive ratio.
\end{proof}

The next lemma allows us to restrict our attention to ordered sequences of
items. It proves that, if an instance violates
the upper bound for Algorithm~\ref{fig:Asub}, we can
rearrange the first $k-1$ items in decreasing order and get another
hard instance.

\begin{lemma}\label{lem:order}
  If $x_1,\ldots,x_n$ is an instance for Algorithm~\ref{fig:Asub}, whose
  packing is triggered by $x_k$, with a competitive ratio larger than
  $2+\alpha$, then there is another instance
  $x_{i_1},x_{i_2},x_{i_3},\ldots,x_{i_{k-1}},\allowbreak x_k, \hdots, x_n$ where
  $(i_1,i_2,\ldots,i_{k-1})$ is a permutation of $(1,\ldots,k-1)$ and
  $x_{i_1}\geq x_{i_2}\geq\ldots\geq x_{i_{k-1}}$, that also has a
  competitive ratio larger than $2+\alpha$.
\end{lemma}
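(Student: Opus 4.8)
The plan is to prove this by an exchange argument that reduces the sorting of the first $k-1$ items to a sequence of adjacent transpositions, each of which leaves the competitive ratio unchanged. The starting point is the observation that $\text{gain}_{\opt}(I)/\text{gain}_{\alg}(I)$ is completely insensitive to the order of the first $k-1$ items, as long as the packing is still triggered exactly at step $k$. Indeed, the offline optimum $\text{gain}_{\opt}(I)=\ensuremath{POPT}(\{x_1,\ldots,x_n\})$ depends only on the multiset of item sizes and is therefore invariant under any permutation of the input. For Algorithm~\ref{fig:Asub}, once the packing is triggered by $x_k$, its gain equals $\ensuremath{POPT}(\{x_1,\ldots,x_k\})-\alpha R$ with $R=x_1+\cdots+x_{k-1}$; both the packed set $\{x_1,\ldots,x_k\}$ and the total reserve $R$ depend only on \emph{which} items precede $x_k$, not on their order. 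Hence it suffices to exhibit a decreasing rearrangement of the first $k-1$ items whose packing is still triggered exactly at step $k$.

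To obtain such a rearrangement I would sort the first $k-1$ items into decreasing order by bubble sort, repeatedly swapping an adjacent pair in which a smaller item is immediately followed by a larger one. The entire argument then reduces to a single invariance claim: if the current instance reserves all of its first $k-1$ items and positions $j,j+1<k$ carry items $a<b$ in this order, then after swapping them to $b,a$ the instance still reserves all of its first $k-1$ items and hence still triggers at $x_k$. Writing $R_{j-1}=x_1+\cdots+x_{j-1}$ for the reserve accumulated before position $j$, the hypotheses that $a$ and $b$ were reserved read $a+(1-\alpha)R_{j-1}<1/(2+\alpha)$ and $b+(1-\alpha)(R_{j-1}+a)<1/(2+\alpha)$.

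After the swap, $b$ sits at position $j$ with reserve $R_{j-1}$, and since $b+(1-\alpha)R_{j-1}\le b+(1-\alpha)(R_{j-1}+a)<1/(2+\alpha)$, item $b$ is reserved. Then $a$ sits at position $j+1$ with reserve $R_{j-1}+b$, and the key identity
\[
a+(1-\alpha)(R_{j-1}+b)=b+(1-\alpha)(R_{j-1}+a)-\alpha(b-a)<\frac{1}{2+\alpha},
\]
where the final inequality uses $a<b$ so that $\alpha(b-a)>0$, shows that $a$ is reserved as well. Since the reserve after position $j+1$ equals $R_{j-1}+a+b$ both before and after the swap, all later steps are identical, and in particular the condition $x_k+(1-\alpha)R\ge 1/(2+\alpha)$ still triggers at step $k$ with the same $R$. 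By the opening observation, the swapped instance therefore has \emph{exactly} the same competitive ratio.

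Iterating these transpositions finitely often sorts the first $k-1$ items into decreasing order without ever altering the competitive ratio, so the resulting instance still exceeds $2+\alpha$, as claimed. The only substantive step is the single-swap invariance, and within it the displayed inequality guaranteeing that the smaller item $a$ remains reserved once it follows the larger item $b$; the rest is bookkeeping about what the algorithm's gain and the offline optimum depend on. I expect that displayed inequality to be the main obstacle, since it is precisely where one must rule out an earlier trigger — the one way a reordering could genuinely change the algorithm's behavior.
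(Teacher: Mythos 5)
Your proof is correct and takes essentially the same route as the paper's: both reduce the sorting of $x_1,\ldots,x_{k-1}$ to adjacent transpositions and show that each swap preserves all reservation decisions (hence the trigger at $x_k$, the reserved set, and therefore the competitive ratio). The only cosmetic difference is that you verify the key step directly via the identity $a+(1-\alpha)(R_{j-1}+b)=b+(1-\alpha)(R_{j-1}+a)-\alpha(b-a)$, whereas the paper establishes the same fact by a contradiction argument on the system of inequalities.
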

\begin{proof}
  We show how to transform the original sequence such that the number of
  inversions is reduced by one. Repeating this transformation yields the
  desired result.
  
  Let $x_1,\ldots,x_n$ be an instance for Algorithm~\ref{fig:Asub}, such that
  packing is triggered by $x_k$ with a competitive ratio larger than
  $2+\alpha$. Assume that $x_i < x_{i+1}$ for some $i<k-1$. We prove that we
  will get again an instance with a competitive ratio larger than $2+\alpha$
  if we swap the positions of $x_i$ and $x_{i+1}$. It is easy to see that
  this transformation preserves the sum of the first $k-1$ items. To show
  that it is a sequence that does not trigger the packing after the $i$th or $(i+1)$st item,
  we have to verify that
  \begin{equation}\label{eq:shorter-sum}
  x_{i+1}<\frac1{2+\alpha}-(1-\alpha)(x_1+\cdots+x_{i-1})
  \end{equation}
  and
  \begin{equation}\label{equ:swap}
  x_i<\frac1{2+\alpha}-(1-\alpha)(x_1+\cdots+x_{i-1}+x_{i+1})\;.
  \end{equation}
  
  Here, (\ref{eq:shorter-sum}) is straightforward, since $x_k$ for $k>i+1$ triggering the
  packing in the unmodified instance implies
  $x_{i+1}<\frac1{2+\alpha}-(1-\alpha)(x_1+\cdots+x_i)$. We prove
  (\ref{equ:swap}) by means of a contradiction. Assume, that
  \begin{equation}\label{eq:swap-contradiction}
  x_i\geq\frac1{2+\alpha}-(1-\alpha)(x_1+\cdots+x_{i-1}+x_{i+1})\;.
  \end{equation}
  
  We also know that, for $1\le j\le i+1$, 
  
  \begin{equation}\label{eq:prev-req}
    x_j\leq \frac1{2+\alpha}-(1-\alpha)(x_1-\hdots-x_{j-1})\;,
  \end{equation}
  
  and by construction 
  
  \begin{equation}\label{eq:cond}
    x_i < x_{i+1}\;.
  \end{equation}
  
  If we define $z=\frac1{2+\alpha}-(1-\alpha)(x_1+\cdots+x_{i-1})$, then we can
  rewrite (\ref{eq:swap-contradiction}), (\ref{eq:prev-req}) for $j=i+1$, and
  (\ref{eq:cond}) as
  \begin{align*}
  x_i&\geq z - \textstyle(1-\alpha) x_{i+1}\\
  x_{i+1}&\leq z-\textstyle(1-\alpha) x_i\\
  x_i&< x_{i+1}\;.
  \end{align*}
  If we modify the first two equations to isolate the term
  $(1-\alpha)^{-1}z$, we obtain
  \begin{align*}
  \frac{x_i}{(1-\alpha)}+x_{i+1}&\geq \frac{z}{(1-\alpha)}\\
  \frac{x_{i+1}}{(1-\alpha)}+x_i&\leq \frac{z}{(1-\alpha)}\;,
  \end{align*}
  which means that
  \begin{equation*}
    \frac{x_{i+1}}{(1-\alpha)}+x_i\leq \frac{x_i}{(1-\alpha)}+x_{i+1}\;,
  \end{equation*}
  that is, these two equations are satisfiable if and only if $x_{i+1}\leq x_i$.
  
  It remains to show that the new sequence has a competitive ratio larger
  than $2+\alpha$. Although the order of $x_1,\ldots,x_{k-1}$ has been
  changed, the requests are still the same and all of them are reserved and
  trigger a packing in $x_k$ with the same reservation size, just as before
  the transformation. Hence, if there was no way to pack the items well
  enough before the transformation, it also cannot be done afterwards.
\end{proof}

From Lemma~\ref{lem:minpack}, we know that, if Algorithm~\ref{fig:Asub}
packs at least $1/(2+\alpha)(1-\alpha)$, this guarantees a
competitive ratio of at most $2+\alpha$.  Thus, if we have enough elements
that are smaller than the gap of size $1-1/(2+\alpha)(1-\alpha)$, those elements
can be packed greedily and always achieve the desired competitive ratio.
Let us call \emph{small items} those of size smaller than\looseness-1
\begin{equation}
1-\frac{1}{(2+\alpha)(1-\alpha)}=
\frac{(2+\alpha)(1-\alpha)-1}{(2+\alpha)(1-\alpha)}=
\frac{1-\alpha-\alpha^2}{(2+\alpha)(1-\alpha)}\;
\label{equ:smallitems}
\end{equation}

This definition is only valid when (\ref{equ:smallitems}) is positive, that is,
when $1-\alpha-\alpha^2\ge0$, which is the case if $0<
\alpha\le \phi-1$, including our desired range. We
call \emph{large items} those of larger size.
Let \fourfourfive be the unique positive real root of the polynomial
$1-2\alpha-\alpha^2+\alpha^3$, \ie,
\[
\fourfourfive = \frac13 +
\frac{2\sqrt{7}}{3}\cos{\biggl(\frac13\arccos{\biggl(-\frac{1}{2\sqrt{7}}\biggr)}-
    \frac{2\pi}{3}\biggr)}\approx 0.445\;.
\]

\begin{lemma}\label{lem:largeitems}
  Given any request sequence $x_1\ge x_2\ge \hdots\ge x_{k-1},
  x_k,\hdots,x_n$, where $x_k$ triggers the
  packing in Algorithm~\ref{fig:Asub},
  there is at most one large item if $0<\alpha\le\fourfourfive$ and there
  are at most two large items if $\fourfourfive< \alpha\le 0.5$.
\end{lemma}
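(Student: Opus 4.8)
The plan is to read the statement as a bound on the number of \emph{reserved} large items among the sorted prefix $x_1\ge\cdots\ge x_{k-1}$ and to exploit two facts: every such item was reserved (since $x_k$ is the first to trigger the packing), and the prefix is sorted in decreasing order. The reservation of $x_j$ means precisely that $x_j+(1-\alpha)(x_1+\cdots+x_{j-1})<1/(2+\alpha)$. Because the prefix is decreasing, the large items (those exceeding the threshold $s:=\frac{1-\alpha-\alpha^2}{(2+\alpha)(1-\alpha)}$ from~(\ref{equ:smallitems})) form an initial block $x_1,\dots,x_m$, so it suffices to bound $m$. I would apply the reservation condition to the last large item $x_m$, obtaining $1/(2+\alpha)>x_m+(1-\alpha)(x_1+\cdots+x_{m-1})$, and then replace each of the $m$ large items by its strict lower bound $s$; the case $m=1$ is trivial because the prefix sum is empty, so the interesting regime is $m\ge 2$.

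Substituting $x_m>s$ and $x_1+\cdots+x_{m-1}>(m-1)s$ yields $1/(2+\alpha)>s\bigl(1+(1-\alpha)(m-1)\bigr)$. Clearing the factor $(2+\alpha)$ and inserting the value of $s$ turns this into $1-\alpha>(1-\alpha-\alpha^2)\bigl(1+(1-\alpha)(m-1)\bigr)$, and using the identity $(1-\alpha)-(1-\alpha-\alpha^2)=\alpha^2$ this collapses to the clean estimate
\[
m-1<\frac{\alpha^2}{(1-\alpha)(1-\alpha-\alpha^2)}\;.
\]
Here I would remark that the denominator is strictly positive on the whole range $0<\alpha\le\tfrac12$, since $1-\alpha>0$ and $1-\alpha-\alpha^2>0$ for $\alpha<\phi-1$; hence multiplying through by it never reverses an inequality.

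It then remains to compare the right-hand side with the integers $1$ and $2$. For the first claim I would show that $\tfrac{\alpha^2}{(1-\alpha)(1-\alpha-\alpha^2)}\le 1$ is equivalent, after expanding the product to $(1-\alpha)(1-\alpha-\alpha^2)=1-2\alpha+\alpha^3$, to $1-2\alpha-\alpha^2+\alpha^3\ge 0$. Since \fourfourfive is by definition the unique positive root of $1-2\alpha-\alpha^2+\alpha^3$ and the cubic equals $1$ at $\alpha=0$, it stays nonnegative on $(0,\fourfourfive]$; thus $m-1<1$, i.e. $m\le 1$, there. For the second claim I would instead compare with $2$: the inequality $\alpha^2\le 2(1-\alpha)(1-\alpha-\alpha^2)$ rearranges to $2-4\alpha-\alpha^2+2\alpha^3\ge 0$, which factors exactly as $(1-2\alpha)(2-\alpha^2)$; both factors are nonnegative for $0<\alpha\le\tfrac12$, so $m-1<2$, i.e. $m\le 2$, on $(\fourfourfive,\tfrac12]$.

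The argument is essentially bookkeeping, and the one step demanding care is recognizing the algebra behind the thresholds: that the cutoff polynomial is precisely the defining cubic of \fourfourfive for the ``$m\le1$'' regime and factors cleanly as $(1-2\alpha)(2-\alpha^2)$ with root $\tfrac12$ for the ``$m\le2$'' regime, so the integer bound on $m$ switches exactly at the stated boundaries. The remaining routine points are verifying the prefix structure of large items from the sorting and confirming positivity of the denominator so that the direction of every inequality is preserved.
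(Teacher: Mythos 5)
Your proof is correct and follows essentially the same route as the paper's: apply the reservation condition to the last large item of the sorted prefix, lower-bound every large item by the threshold $\frac{1-\alpha-\alpha^2}{(2+\alpha)(1-\alpha)}$, arrive at exactly the paper's bound $m-1<\frac{\alpha^2}{(1-\alpha)(1-\alpha-\alpha^2)}$, and compare it against the integers $1$ and $2$ via the sign of the same two cubics. The one small improvement worth noting is your factorization $2-4\alpha-\alpha^2+2\alpha^3=(1-2\alpha)(2-\alpha^2)$, which justifies the $\alpha\le\frac12$ threshold more cleanly than the paper's claim that $\frac12$ is ``the unique positive real root'' of that polynomial (it is not unique, since $\sqrt{2}$ is also a root, although this does not affect the conclusion on the relevant range).
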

\begin{proof}
Assume by contradiction that there exists a request
sequence where $x_1\ge x_2\ge \hdots \ge x_i \ge
\frac{1-\alpha-\alpha^2}{(2+\alpha)(1-\alpha)}$, with $k>i$. Any item $x_j$
with $j\le i$ satisfies
  \[x_{j}\le \frac1{2+\alpha}- (1-\alpha)(x_1+x_2+\hdots+x_{j-1})\;,\]
  in particular,
  \begin{equation}\label{eq:xi}
  x_{i}\le \frac1{2+\alpha}- (1-\alpha)(x_1+x_2+\hdots+x_{i-1})\;.
  \end{equation}
  
  All of the contributions of previous requests are negative, so in order to
  obtain a maximal value for $x_i$, we need that $x_1,\hdots, x_{i-1}$ are
  minimal, but by construction, still greater or equal than $x_i$. Thus, the
  maximal value is obtained when $x_1=x_2=\hdots=x_i$. In this case, we
  obtain from (\ref{eq:xi}) the upper bound
  $x_{i}\le \frac1{2+\alpha}- (1-\alpha)(i-1)x_i$
  which we solve for $x_i$ and obtain
  $x_{i}\le \frac{1}{(2+\alpha)(i(1-\alpha)+\alpha)}$.

  We also know that $x_i$ is a large item, thus we can also state 
  $x_i \ge \frac{1-\alpha-\alpha^2}{(2+\alpha)(1-\alpha)}$ as a lower bound for $x_i$.

  Thus we get
  $\frac{1}{(2+\alpha)(i(1-\alpha)+\alpha)}\ge
  \frac{1-\alpha-\alpha^2}{(2+\alpha)(1-\alpha)}$
  which we solve for $i$ to obtain
  $i\le 
    1+\frac{\alpha^2}{(1-\alpha)(1-\alpha-\alpha^2)}$.
  In particular, for $i=2$ we get
    $2(1-\alpha)(1-\alpha-\alpha^2)\le(1-\alpha)(1-\alpha-\alpha^2)+\alpha^2$
    which is equivalent to
    $(1-\alpha)(1-\alpha-\alpha^2)\le\alpha^2$
    and thus to
    $1-2\alpha-\alpha^2+\alpha^3\le0$,
which means that the number of large items is strictly smaller than $2$
  for $\alpha\le\fourfourfive$.

  For $i=3$, we get
    $3(1-\alpha)(1-\alpha-\alpha^2)\le(1-\alpha)(1-\alpha-\alpha^2)+\alpha^2$
    or equivalently
    $2(1-2\alpha+\alpha^3)\le\alpha^2$.
    Hence,
    $2-4\alpha-\alpha^2+2\alpha^3\le 0$, 
    which means that the number of large items is strictly smaller than 3 for
  $\alpha \le 0.5$, since $0.5$ is the unique positive real root of the
left-hand-side polynomial.
\end{proof}

Now we are ready to prove the claimed competitive ratio of Algorithm~\ref{fig:Asub}.

\begin{proof}[Proof of Theorem~\ref{thm:ubsmall}]
 Let us consider first the case where $\alpha\le\fourfourfive$, which means
  that only one large element can appear in the request sequence without
  triggering the packing. We know that,
  if $x_1\ge 1/(2+\alpha)$, then the algorithm will take it and our
  competitive ratio will be at most $2+\alpha$, and if a request sequence
  ends without triggering the packing, then the optimum would be
  $x_1+\hdots+x_n=s<1$ and Algorithm~\ref{fig:Asub} would have a gain of
  $(1-\alpha)s$, so the competitive ratio will be at most
  $1/(1-\alpha)<2+\alpha$.
  
  Otherwise, assume that there is a shortest sequence
  containing at least two elements, which at some point, namely with request
  $x_k$, triggers Algorithm~\ref{fig:Asub} to pack. We now consider all
  possible cases.
  
  If $x_1+\hdots+x_k\le 1$, then we know that the gain of the algorithm is
  $x_k+(1-\alpha)(x_1+\hdots+x_{k-1})$, which is larger than $1/(2+\alpha)$
  by construction.
  
  We are now in the case where $x_1+\hdots+x_k>1$. If all of the elements of
  the request sequence are small, Algorithm~\ref{fig:Asub} packs $x_k$ and
  then it can greedily pack elements from $x_1,\hdots, x_{k-1}$. Because these
  elements all have size smaller than $1-\frac{1}{(2+\alpha)(1-\alpha)}$, we
  know that Algorithm~\ref{fig:Asub} will be able to pack at least
  $\frac{1}{(2+\alpha)(1-\alpha)}$ into the knapsack, obtaining a competitive
  ratio of at most $2+\alpha$ by Lemma~\ref{lem:minpack}.
  
  If not all elements are small, we know that at most $x_1$ is a large
  element, and we do a further case distinction. If $x_1+x_k<1$, then we pack
  them, and we greedily pack the rest of the items, obtaining a competitive
  ratio of at most $2+\alpha$ with the same argument as in the previous case.
  If $x_1+x_k>1$, then we consider two cases. If
  $x_2+\hdots+x_k>\frac{1}{(2+\alpha)(1-\alpha)}$, we can again greedily pack
  the small items until we obtain the desired competitive ratio. Otherwise,
  we know that $x_k>1-x_1$ and
  $x_2+\hdots+x_k<\frac{1}{(2+\alpha)(1-\alpha)}$ and thus $x_2,\ldots,x_k$
  can all be packed into the knapsack. In this case, we can calculate the
  gain of Algorithm~\ref{fig:Asub} as
  \begin{align*}
    & \hspace*{-1em}x_2+\hdots+x_k -\alpha (x_1+\hdots+x_{k-1})\\
    &= x_k + (1-\alpha)(x_2+\hdots+x_{k-1}) -\alpha x_1\\
    &> 1 - x_1 -\alpha x_1 +(1-\alpha)(x_2+\hdots+x_{k-1})\\
    &= 1- (1+\alpha) x_1 +(1-\alpha)(x_2+\hdots+x_{k-1})\\
    &> 1- \frac{(1+\alpha)}{(2+\alpha)} +(1-\alpha)(x_2+\hdots+x_{k-1})\\
    &= \frac{1}{(2+\alpha)}+(1-\alpha)(x_2+\hdots+x_{k-1})\;,
  \end{align*}
  where we used the fact that $x_k>1-x_1$, and that $x_1<1/(2+\alpha)$ since
  it did not trigger the packing. This gain is larger than $1/(2+\alpha)$ and
  we obtain the desired bound for the competitive ratio.
  
  Now, let us consider the case where $\alpha \le 0.5$, which means that the
  request sequence either has at most two large elements or it is shorter
  than 3 elements. If we get a request sequence of $2$ elements, the first
  one has to be $x_1<1/(2+\alpha)$, otherwise we already have the desired
  competitive ratio by taking that element. Now, if $x_2\le 1-1/(2+\alpha)$,
  then both elements fit into the knapsack and we get again a competitive
  ratio of at most $1/(1-\alpha)<2+\alpha$. Otherwise, $x_2>1-1/(2+\alpha)$,
  so the algorithm can pack $x_2$ and obtain a gain of at least
  $1-1/(2+\alpha)-\alpha/(2+\alpha)=1/(2+\alpha)$.
  
  If the request sequence has at least length 3, we do a further case
  distinction. As in the previous case, if a request sequence ends without
  triggering the packing, then the optimal packing is $x_1+\hdots+x_n=s<1$
  and Algorithm~\ref{fig:Asub} has a gain of $(1-\alpha)s$, so the
  competitive ratio is at most $1/(1-\alpha)<2+\alpha$. Otherwise, assume
  that there is a shortest contradictory sequence which at some point $x_k$
  triggers Algorithm~\ref{fig:Asub} to pack. If $x_1+\hdots+x_k\le 1$, then we
  know that the gain of the algorithm is
  $x_k+(1-\alpha)(x_1+\hdots+x_{k-1})$, which is larger than $1/(2+\alpha)$
  by construction.
  
  Otherwise, $x_1+\hdots+x_k>1$. In this case, if all of the elements of the
  request sequence are small or there is at most one large element, we are in
  the same case as before
    and we obtain the desired competitive ratio. Let us
  now assume that we have two large elements in our request sequence. We know
  that  $x_1<\frac{1}{2+\alpha}$
  and $x_2<1/(2+\alpha)-(1-\alpha)x_1$. By Lemma~\ref{lem:order}, 
  we can also assume that $x_1\ge x_2$ and with this we can
  deduce that $x_2< \frac{1}{(2+\alpha)}-(1-\alpha)x_2\nonumber$ and thus
  $(2-\alpha)x_2<\frac{1}{(2+\alpha)}$
  which allows us to estimate $x_2$ as $x_2<\frac{1}{(2+\alpha)(2-\alpha)}$
  Now, we can fall in one of the following cases: 
    \begin{enumerate}
    \item If $x_1+x_2+x_k\le 1$, filling up greedily from the small items, we
      obtain the desired competitive ratio.
    \item If $x_1+ x_2+ x_k>1$, but $x_1+x_k\le1$, we pack $x_1$ and $x_k$ and
      pack the small items greedily. If the small items fill the knapsack until at
      least $\frac{1}{(2+\alpha)(1-\alpha)}$, we know that we obtain the
      desired competitive ratio. Otherwise, the gain obtained by the algorithm
      is
    \begin{align*}
    & \hspace*{-1em}x_1+x_3+\hdots+x_k -\alpha (x_1+\hdots+x_{k-1})\\
    &= x_k + (1-\alpha)(x_1+x_3+\hdots+x_{k-1}) -\alpha x_2\\
    &> 1 - x_1 - x_2 +(1-\alpha) x_1 -\alpha x_2 +(1-\alpha)(x_3+\hdots+x_{k-1})\\
    &= 1- \alpha x_1 - (1+\alpha) x_2 +(1-\alpha)(x_3+\hdots+x_{k-1})\\
    &> 1- \frac{\alpha}{2+\alpha}-\frac{1+\alpha}{(2+\alpha)(2-\alpha)} +(1-\alpha)(x_3+\hdots+x_{k-1})\\
    &= \frac{(2+\alpha)(2-\alpha)-\alpha(2-\alpha)-(1+\alpha)}{(2+\alpha)(2-\alpha)}+(1-\alpha)(x_3+\hdots+x_{k-1})\\
    &= \frac{3(1-\alpha)}{(2+\alpha)(2-\alpha)}+(1-\alpha)(x_3+\hdots+x_{k-1})\\
    &\ge \frac{1}{2+\alpha}+(1-\alpha)(x_3+\hdots+x_{k-1})\;.
    \end{align*}
    where the last step is true for any $\alpha \le 1/2$, and we used
    $x_k>1-x_1-x_2$ and the upper bounds $x_1<\frac{1}{2+\alpha}$ and
    $x_2<\frac{1}{(2+\alpha)(2-\alpha)}$ for $x_1$ and $x_2$.
  \item If $x_1 + x_k>1$, but $x_2+x_k\le1$, we pack $x_2$ and $x_k$ into the
    knapsack and the small items greedily. Here, again, if the small items
    fill the knapsack until at least $\frac{1}{(2+\alpha)(1-\alpha)}$, we
    know that we obtain the desired competitive ratio, otherwise, the gain
    obtained by the algorithm is $ x_2\hdots+x_k -\alpha (x_1+\hdots+x_{k-1})$
    which is exactly the same as in the case for $\alpha\le\fourfourfive$,
    and we can perform the same operations and obtain the desired bound for
    the competitive ratio.
  \item If $x_2+x_k>1$, we pack $x_k$ into the knapsack and pack the small items
    greedily. If these fill the knapsack with at least
    $\frac{1}{(2+\alpha)(1-\alpha)}$, we know that we obtain the desired
    competitive ratio, otherwise, the gain obtained by the algorithm is
    \begin{align*}
    & \hspace*{-1em}x_3+\hdots+x_k -\alpha (x_1+\hdots+x_{k-1})&\\
    &= x_k + (1-\alpha)(x_3+\hdots+x_{k-1}) -\alpha x_2-\alpha x_1& \text{$x_k > 1-x_2$, remove $(1-\alpha)x_1$}\\
    &\geq 1 - x_2 -\alpha x_1 -\alpha x_2 +(1-\alpha)(x_3+\hdots+x_{k-1})&\\
    &= 1- \alpha x_1 - (1+\alpha) x_2 +(1-\alpha)(x_3+\hdots+x_{k-1})\;,&
    \end{align*}
    which is exactly the same as in the case where $x_1+ x_2+ x_k>1$ but
    $x_1+x_k\le1$.
  \end{enumerate} 
\end{proof}

We continue with proving an upper bound for the rest of the interval, using 
induction over the number of large elements. Before we start our proof, we provide a
lemma that allows us to ignore possible small elements during the proof of
Theorem~\ref{thm:ubsmall2}.

\begin{lemma}\label{lem:smallelements} 
Given a request sequence without small
elements for which Algorithm~\ref{fig:Asub} does not achieve a competitive
ratio of $2+\alpha$, adding small elements to it will only improve its
competitive ratio.
\end{lemma}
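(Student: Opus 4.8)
The plan is to first restrict attention to the only regime in which the hypothesis can hold. If the no-small sequence $\sigma$ never triggers the packing, then the optimum equals the total requested size $s'<1$ while Algorithm~\ref{fig:Asub} gains $(1-\alpha)s'$, so its ratio is at most $1/(1-\alpha)$, which stays below $2+\alpha$ for every $\alpha\le\phi-1$; hence a sequence that violates the $2+\alpha$ bound must trigger the packing at some item~$x_k$. Abbreviating the optimal gain and the gain of Algorithm~\ref{fig:Asub} on $\sigma$ by $O$ and $G$, I would then run the algorithm on the enriched sequence $\sigma'$ and split according to how full the knapsack ends up. By Lemma~\ref{lem:minpack}, whenever the final packing reaches $1/((2+\alpha)(1-\alpha))$ the ratio is already at most $2+\alpha$, which is an improvement over $\sigma$; so the whole difficulty lies in the complementary regime, where the algorithm packs strictly less than $1/((2+\alpha)(1-\alpha))$.

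In that regime I would exploit that every small element has size below the gap $1-1/((2+\alpha)(1-\alpha))$. As long as the knapsack holds less than $1/((2+\alpha)(1-\alpha))$, its free space exceeds this gap, so any still-unpacked small element fits; since the algorithm packs the available items optimally, this forces every small element it has seen to end up inside the knapsack. Writing $s$ for the total size of the inserted small elements, each of them is first reserved (costing $\alpha$ times its size) and then packed, contributing a net $(1-\alpha)$ times its size to the gain, so $G$ grows by at least $(1-\alpha)s$. On the other side, no packing can profit from the new items by more than their total size, so $O$ grows by at most $s$. Feeding the bounds $\Delta O\le s$ and $\Delta G\ge(1-\alpha)s$ into the elementary estimate $\frac{O+\Delta O}{G+\Delta G}\le\frac{O}{G}$, which holds precisely when $O/G\ge 1/(1-\alpha)$, then closes the argument: the original ratio $O/G$ exceeds $2+\alpha$, and $1/(1-\alpha)\le 2+\alpha$ is equivalent to $\alpha^2+\alpha-1\le0$, that is, to $\alpha\le\phi-1$, so the inserted small elements cannot raise the ratio.

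The main obstacle is that this clean accounting silently assumes the large elements are reserved and packed exactly as in $\sigma$, whereas inserting small elements inflates $R$ and may trigger the packing earlier, possibly changing which large elements are reserved and which enter the final packing. To control this I would argue, in the spirit of Lemma~\ref{lem:order}, that it suffices to treat the small elements as if they were offered contiguously inside the reserved prefix: reordering the reserved items changes neither the reservation total with which $x_k$ triggers (bounded by Lemma~\ref{lem:res_size}) nor the pool from which the final packing is selected, hence cannot lower the large-element part of the gain. An earlier trigger only means the algorithm faces at least as large a reserve together with more filler, which by the greedy-fill observation can only help it toward $1/((2+\alpha)(1-\alpha))$; thus the large-element contribution never drops below its value on $\sigma$ and the net change of $G$ is indeed at least $(1-\alpha)s$. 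Making this exchange step fully rigorous, together with the boundary case in which a small element is itself the trigger, is the delicate part of the proof.
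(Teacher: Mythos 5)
Your proposal is correct and follows essentially the same route as the paper's own proof: via (the contrapositive of) Lemma~\ref{lem:minpack}, a violating run of Algorithm~\ref{fig:Asub} stays below $1/((2+\alpha)(1-\alpha))$, so every small element fits into the remaining gap and gets packed, whence added small elements either push the packing up to that threshold (and Lemma~\ref{lem:minpack} finishes) or are all reserved-and-packed, contributing net gain $(1-\alpha)$ times their size. If anything, you are more careful than the paper, whose proof consists of exactly this dichotomy plus the bare assertion that packed small items ``never contribute negatively to the total packing gain'': it carries out neither your explicit ratio accounting ($\Delta O\le s$, $\Delta G\ge(1-\alpha)s$, valid because $2+\alpha\ge 1/(1-\alpha)$ for $\alpha\le\phi-1$) nor any treatment of the trigger-shift issue that you rightly single out as the delicate point.
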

\begin{proof}
Let us consider a request sequence containing only large elements $x_1\ge \hdots
\ge x_{k-1},x_k,\hdots,x_n$, where $x_k$ is the element triggering the packing
for  Algorithm~\ref{fig:Asub}, and for which the achieved competitive ratio is larger than
$2+\alpha$. This means, by Lemma~\ref{lem:minpack}, that the total size of the items packed into the knapsack is
smaller than $1/(2+\alpha)(1-\alpha)$. By definition, if we add enough small
elements to the request sequence before $x_k$, the small elements can be packed
greedily until the knapsack is filled up to $1/(2+\alpha)(1-\alpha)$, achieving the
desired competitive ratio. If not enough of them are added before $x_k$, the
small elements requested before $x_k$ will be reserved but will still be able to
be packed, so they will never contribute negatively to the total packing gain.
\end{proof}

This lemma brings us to the point where it is possible to prove Theorem~\ref{thm:ubsmall2}.

\begin{proof}[Proof of Theorem~\ref{thm:ubsmall2}]
We prove by induction that, for any finite number of large elements before the packing, Algorithm~\ref{fig:Asub} achieves the desired competitive ratio for $\alpha <\phi-1$.
The base case for zero large elements is trivial, as we can greedily reserve and later pack all small elements to get the desired competitive ratio.
Let us assume that Algorithm~\ref{fig:Asub} achieves the desired competitive ratio for any request sequence with less than $k-1$ large elements before the algorithm packs and stops.

If a request sequence has $k-1$ large elements, we can assume by
Lemma~\ref{lem:order} that the smallest of those is $x_{k-1}$, and we
can also assume by
Lemma~\ref{lem:smallelements} that $x_{k}$ triggers the packing. 
 Let us assume, using Lemma~\ref{lem:order2}, that we have an instance 
  $x_1\ge x_2 \ge \hdots\ge x_{k-1}, x_k, \hdots x_n$. 
Moreover, note that 
\begin{align}
\frac{\alpha}{1-\alpha}  &< 1-\alpha \text{ for }\alpha < \phi -1\;.\label{eq:boundalpha}.
\end{align}
 We distinguish two cases. 
\begin{enumerate}
\item Assume that, when the packing is triggered, $x_{k-1}$ is not part of an optimal packing.

In this case, 
\begin{equation}\label{eq:condition}
  \biggl(\sum_{i\text{ is packed}}x_i\biggr)+x_{k-1}>1\;.
\end{equation} 
We also have 
\begin{align}
x_{k-1}+(1-\alpha)\sum_{j<k-1}x_j&<\frac{1}{2+\alpha}\;,\label{eq:bound1}
\end{align}
since $x_{k-1}$ would trigger the packing otherwise, and thus 
\begin{align}
x_{k-1}&< \frac1{2+\alpha}\;.\label{eq:bound2}
\end{align}
Thus, the gain that Algorithm~\ref{fig:Asub} achieves is 
\begin{flalign*}
& \biggl(\sum_{j\text{ is packed}}x_j\biggr)-\alpha R&& \\
& > 1 - x_{k-1} -\alpha \sum_{j\le k-1} x_j && \text{using (\ref{eq:condition})} \\
& = 1 - (1+\alpha) x_{k-1} - \alpha \sum_{j < k-1} x_j &&  \\
& = 1 - 2\alpha x_{k-1} -(1-\alpha)\biggl( x_{k-1}
+\frac{\alpha}{1-\alpha}\sum_{j < k-1} x_j \biggr) && \\
& \geq 1 - 2\alpha x_{k-1} -(1-\alpha)\biggl( x_{k-1}
+(1-\alpha)\sum_{j < k-1} x_j \biggr) && \text{using (\ref{eq:boundalpha})}\\
& \geq 1 - 2\alpha x_{k-1} -(1-\alpha)\frac{1}{2+\alpha} && \text{using (\ref{eq:bound1})} \\
& > 1 - \frac{2\alpha}{2+\alpha} -\frac{1-\alpha}{2+\alpha} && \text{using (\ref{eq:bound2})}\\
& = \frac{1}{2+\alpha} \;,&&
\end{flalign*}
as we wanted.

\item Assume that when the packing is triggered, $x_{k-1}$ is part of an optimal packing. 
We consider two subcases. 
\begin{enumerate}
\item Taking $x_{k-1}$ out of the request sequence still triggers the packing.

This means that 
$x_{k}+(1-\alpha)(R-x_{k-1})\ge \frac{1}{2+\alpha}$
holds. We can thus consider the sequence $x_1,\hdots,x_{k-2},x_{k}$. This sequence has $k-2$ large elements, and $x_{k-1}$ cannot be part of its optimal solution, thus its competitive ratio is at most $2+\alpha$ by the induction hypothesis, and the competitive ratio after adding $x_{k-1}$ can only get better.
\item Taking $x_{k-1}$ out of the request sequence does not trigger the packing.

This means that
\begin{align}
 x_{k}+(1-\alpha)(R-x_{k-1})< \frac{1}{2+\alpha}\;,&\text{ and also}\label{eq:condition2}\\
 x_{k}+(1-\alpha)R\ge \frac{1}{2+\alpha}\;,&\label{eq:condition3}
\end{align}

and, if we let $x_t$ be the smallest element that does not get packed, $x_t\ge x_{k-1}$ holds. Also,
because of the optimality of the packing $x_{k}\ge x_t$, (otherwise one can take all of the reserved elements as the packing and obtain a better bound) and
\begin{equation}\label{eq:condition4}
 \sum_{j\text{ is packed}}x_j-x_{k-1}+x_t>1\;
\end{equation}
holds.
With these bounds, the gain incurred by the algorithm is at least
\begin{flalign*}
& \hspace*{-3em} \biggl(\sum_{j\text{ is packed}}x_j\biggr)-\alpha R\\
& > 1- x_t +  x_{k-1} -\alpha R &&\text{using (\ref{eq:condition4})} \\
& > 1 -x_t + \frac{x_{k}}{1-\alpha} + R - \frac{1}{(1-\alpha)(2+\alpha)} -\alpha R &&\text{using (\ref{eq:condition2})} \\
& = 1 -x_t + \frac{x_{k}}{1-\alpha} + (1 - \alpha) R - \frac{1}{(1-\alpha)(2+\alpha)}&&\\
& \geq 1 -x_t + \frac{\alpha x_{k}}{1-\alpha} +\frac{1}{2+\alpha}  - \frac{1}{(1-\alpha)(2+\alpha)}&&\text{using (\ref{eq:condition3})} \\
& = \frac{1}{2+\alpha}+\frac{1-\alpha-\alpha^2}{(1-\alpha)(2+\alpha)}-x_t+\frac{\alpha x_{k}}{1-\alpha}&& \\
& \geq \frac{1}{2+\alpha}+\frac{1-\alpha-\alpha^2}{(1-\alpha)(2+\alpha)}+\frac{\alpha -(1-\alpha)}{1-\alpha}x_{k}&& \text{using }x_k\ge x_t\\
& = \frac{1}{2+\alpha}+\frac{1-\alpha-\alpha^2}{(1-\alpha)(2+\alpha)}+\frac{2\alpha -1}{1-\alpha}x_{k}&&\\
& \geq \frac{1}{2+\alpha}\;,
\end{flalign*}
where the last step is trivially true for any $\alpha\ge 1/2$. Thus we get the desired competitive ratio in the considered range for $\alpha$.
\end{enumerate}
\end{enumerate}
This proves the induction step, and thus the desired upper bound on the competitive ratio.
\end{proof}

\subsection{Upper Bound for $\phi-1\le\alpha< 1$}

For proving an upper bound, we consider Algorithm~\ref{fig:Alub} and first bound
its reservation costs.

\begin{algorithm}[tb]
\begin{algorithmic}
\State{$R:=0$;}
\For{$k=1,\ldots,n$}
  \If{$x_k + (1-\alpha) R \geq 1-\alpha$}
    \State{pack $x_1,\ldots,x_k$ optimally;}
    \State{stop}
  \Else
    \State{reserve $x_k$;}
    \State{$R:=R+x_k$}
  \EndIf
\EndFor
\State{pack $x_1,\ldots,x_n$ optimally}
\end{algorithmic}
\caption{Algorithm for $\phi-1 \leq \alpha < 1$}
\label{fig:Alub}
\end{algorithm}

\begin{lemma}\label{lem:res_size_bigalpha}
  For Algorithm~\ref{fig:Alub}, the reservation cost $R$ is never
  larger than $1$.
\end{lemma}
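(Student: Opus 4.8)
The plan is to mirror the argument of Lemma~\ref{lem:res_size}, replacing the threshold $1/(2+\alpha)$ by the triggering threshold $1-\alpha$ that Algorithm~\ref{fig:Alub} actually uses. Since $R$ is only ever increased (in the reservation step) and is never decreased, it suffices to bound the value of $R$ at the moment the algorithm stops producing reservations, whether this happens because some request triggers the packing or because the request sequence ends without any trigger.

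First I would single out the last item that gets reserved. If the packing is triggered by some $x_k$, this item is $x_{k-1}$; if the sequence is exhausted without a trigger, it is $x_n$. In either case, call it $x_\ell$ and write $R=R'+x_\ell$, where $R'$ is the total size of the items reserved strictly before $x_\ell$. Because $x_\ell$ was reserved rather than triggering the packing, its arrival failed the triggering condition, so $x_\ell+(1-\alpha)R'<1-\alpha$. The key step is then the same slack estimate as in Lemma~\ref{lem:res_size}: since $\alpha\ge 0$ we have $x_\ell\ge(1-\alpha)x_\ell$, hence
\[
(1-\alpha)R=(1-\alpha)x_\ell+(1-\alpha)R'\le x_\ell+(1-\alpha)R'<1-\alpha\;.
\]
Dividing by $1-\alpha$, which is positive since $\alpha<1$, yields $R<1$, which is in fact slightly stronger than the claimed bound $R\le 1$. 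If no item is ever reserved, then $R=0$ and the bound holds trivially.

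There is no substantial obstacle here; the only point that deserves care — and the one the corresponding argument for Lemma~\ref{lem:res_size} glosses over — is that the bound must also cover the case in which the request sequence terminates without any request triggering the packing. Handling the last reserved item uniformly, as above, disposes of this cleanly, because the failed reservation condition for $x_\ell$ is available regardless of whether the loop was left via the stop statement or by exhausting the input.
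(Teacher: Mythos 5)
Your proof is correct and follows essentially the same argument as the paper: both rest on the observation $x\ge(1-\alpha)x$ applied to the failed triggering condition, giving $(1-\alpha)R\le x_\ell+(1-\alpha)R'<1-\alpha$ and hence $R<1$. The only cosmetic difference is that the paper applies this estimate to \emph{every} reserved item $x_k$ (bounding each intermediate $R_k$), whereas you apply it once to the last reserved item after noting that $R$ is nondecreasing; the paper's formulation already covers the case where the sequence ends without a trigger, so this is not a gap it glosses over.
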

\begin{proof}
  For any $j$, let $R_j$ denote the reservation cost of the algorithm after
  the items $x_1,\ldots,x_j$ have been presented.  
  If, for any $k$,
  \[x_{k}+(1-\alpha)R_{k-1}<1-\alpha\;,\] then the algorithm reserves the
  item $x_{k}$ and has a new reservation cost of $R_{k} = x_{k} +
  R_{k-1}$. Since obviously $x_{k}\ge (1-\alpha)x_{k}$, we have
  \[(1-\alpha)R_{k}=(1-\alpha)x_{k}+(1-\alpha)R_{k-1}\le
  x_{k}+(1-\alpha)R_{k-1}<1-\alpha\;.\] Thus,
  $R_{k}<1$. Because we can apply this reasoning to any $x_{k}$ that
  does not trigger the \textbf{if}-condition in the algorithm, we have
  proven the claim.
\end{proof}

And we also need a lemma, analogous to Lemma~\ref{lem:order},
allowing us to reorder the items,

\begin{lemma}\label{lem:order2}
  If $x_1,\ldots,x_n$ is an instance for Algorithm~\ref{fig:Alub}, whose
  packing is triggered by $x_k$, with a competitive ratio larger than
  $\frac{1}{1-\alpha}$, then there is another instance
  $x_{i_1},x_{i_2},x_{i_3},\ldots,x_{i_{k-1}},\allowbreak x_k, \hdots, x_n$ where
  $(i_1,i_2,\ldots,i_{k-1})$ is a permutation of $(1,\ldots,k-1)$ and
  $x_{i_1}\geq x_{i_2}\geq\ldots\geq x_{i_{k-1}}$, that also has a
  competitive ratio larger than $\frac{1}{1-\alpha}$.
\end{lemma}
\begin{proof}
  We show how to transform the original sequence such that the number of
  inversions is reduced by one. Repeating this transformation yields the
  desired result.
  
  Let $x_1,\ldots,x_n$ be an instance for Algorithm~\ref{fig:Alub}, such that
  packing is triggered by $x_k$ with a competitive ratio larger than
  $\frac{1}{1-\alpha}$. 
  Assume that $x_i < x_{i+1}$ for some $i<k-1$. We prove that we
  will get again an instance with a competitive ratio larger than $\frac{1}{1-\alpha}$
  if we swap the positions of $x_i$ and $x_{i+1}$. It is easy to see that
  this transformation preserves the sum of the first $k-1$ items. To show
  that it is a sequence that does not trigger the packing after the $i$th or $(i+1)$st item,
  we have to verify that
  \begin{equation}\label{eq:shorter-sum2}
  x_{i+1}<(1-\alpha)-(1-\alpha)(x_1+\cdots+x_{i-1})
  \end{equation}
  and
  \begin{equation}\label{equ:swap2}
  x_i<(1-\alpha)-(1-\alpha)(x_1+\cdots+x_{i-1}+x_{i+1})\;.
  \end{equation}

  Here, (\ref{eq:shorter-sum2}) is straightforward, since $x_k$ for $k>i+1$ triggering the
  packing in the unmodified instance implies
  $x_{i+1}<(1-\alpha)-(1-\alpha)(x_1+\cdots+x_i)$. We prove
  (\ref{equ:swap2}) by means of a contradiction. Assume, that
  \begin{equation}\label{eq:swap-contradiction2}
  x_i\geq(1-\alpha)-(1-\alpha)(x_1+\cdots+x_{i-1}+x_{i+1})\;.
  \end{equation}
  
  We also know that, for $1\le j\le i+1$, 
  
  \begin{equation}\label{eq:prev-req2}
    x_j\leq (1-\alpha)-(1-\alpha)(x_1-\hdots-x_{j-1})\;,
  \end{equation}
  
  and by construction 
  
  \begin{equation}\label{eq:cond2}
    x_i < x_{i+1}\;.
  \end{equation}
  
  If we define $z=(1-\alpha)-(1-\alpha)(x_1+\cdots+x_{i-1})$, then we can
  rewrite (\ref{eq:swap-contradiction2}), (\ref{eq:prev-req2}) for $j=i+1$, and
  (\ref{eq:cond2}) as
  \begin{align*}
  x_i&\geq z - \textstyle(1-\alpha) x_{i+1},\\
  x_{i+1}&\leq z-\textstyle(1-\alpha) x_i,\\
  x_i&< x_{i+1}\;.
  \end{align*}
  
  As in the proof of Lemma~\ref{lem:order}, we can solve this system of equations and conclude
  that the first two conditions are only satisfiable if $x_{i+1}\leq x_i$, which gives us the desired contradiction. 
  
  Thus, the new sequence has a competitive ratio larger
  than $\frac{1}{1-\alpha}$. Again, although the order of $x_1,\ldots,x_{k-1}$ has been
  changed, the same reserved requests
  trigger a packing with $x_k$ with the same reservation size, just as before
  the transformation. Hence, if there was no way to pack the items well
  enough before the transformation, it also cannot be done afterwards.
\end{proof}

We are now ready to prove the desired competitive ratio for~Algorithm \ref{fig:Alub}.

\begin{theorem}\label{thm:ublarge}
Algorithm~\ref{fig:Alub} is an online algorithm for \RKP achieving a 
competitive ratio of at most $\frac{1}{1-\alpha}$, for all
$\phi-1\le\alpha< 1$.
\end{theorem}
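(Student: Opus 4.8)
The plan is to analyze Algorithm~\ref{fig:Alub} directly, splitting into the case where the packing is never triggered and the case where some item $x_k$ triggers it, and to exploit throughout that the optimal gain never exceeds the knapsack capacity $1$. First I would dispose of the non-triggering case: if the \textbf{if}-condition never fires, every item is reserved and, by Lemma~\ref{lem:res_size_bigalpha}, the final reservation size satisfies $R<1$, so all items fit and are packed at the end. Hence $\text{gain}_{\alg}=(1-\alpha)R$ while $\text{gain}_{\opt}=R$, giving a ratio of exactly $1/(1-\alpha)$; this is also the tight instance. For the triggering case I would first invoke Lemma~\ref{lem:order2} to assume the reserved items arrive in nonincreasing order $x_1\ge\cdots\ge x_{k-1}$, with $x_k$ triggering; note each reserved item is then smaller than $1-\alpha$, since none of them triggered. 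Because $\text{gain}_{\opt}\le 1$, it suffices in every triggering case to prove $\text{gain}_{\alg}\ge 1-\alpha$, which immediately yields the claimed ratio.

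The easy triggering subcase is $x_1+\cdots+x_k\le 1$: the algorithm packs everything, so $\text{gain}_{\alg}=x_k+(1-\alpha)R\ge 1-\alpha$ directly from the triggering condition. The genuinely hard subcase, which I expect to be the main obstacle, is the overflow case $x_1+\cdots+x_k>1$, where $x_k$ can no longer be packed together with all reserved items. Here one must show that the best packing $P$ of $\{x_1,\ldots,x_k\}$ still satisfies $P\ge 1-\alpha+\alpha R$, equivalently $\text{gain}_{\alg}=P-\alpha R\ge 1-\alpha$. Two packings are always available: packing all reserved items (legal since $R<1$) gives $P\ge R$, and packing $x_k$ and then greedily adding reserved items leaves a gap below $1-\alpha$ (otherwise some omitted reserved item, being smaller than $1-\alpha$, would still fit), which gives $P>\alpha$. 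These two bounds alone already settle the regime $R\le 1-\alpha$, but they both degrade once $R>1-\alpha$.

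To cover the remaining regime $R>1-\alpha$ I would exploit the tension built into the triggering rule. A large triggering item $x_k$ forces the algorithm to stop while $R$ is still small, since $x_k\ge(1-\alpha)(1-R)$; conversely, a large reserved sum $R$ can only be accumulated from many items each well below $1-\alpha$, and such items pack tightly around $x_k$, driving the gap and hence $1-P$ down. Quantifying this requires bounding the number of reserved items that can exceed $1-x_k$ (precisely the ones blocking $x_k$ from the knapsack), an argument analogous in spirit to Lemma~\ref{lem:largeitems}, and then verifying the resulting worst-case inequality for $P$. This is exactly where the hypothesis $\alpha\ge\phi-1$ enters: the boundary value satisfies $\alpha^2=1-\alpha$, so the critical estimate collapses to $\alpha^2\ge 1-\alpha$, which holds precisely for $\alpha\ge\phi-1$.

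Assembling the pieces, every triggering case yields $\text{gain}_{\alg}\ge 1-\alpha$, and the non-triggering case yields the ratio $1/(1-\alpha)$ exactly, so Algorithm~\ref{fig:Alub} is $\frac{1}{1-\alpha}$-competitive on the whole range $\phi-1\le\alpha<1$. I anticipate that essentially all of the difficulty and bookkeeping lives in the overflow subcase; the other cases are short, and the role of the threshold $\phi-1$ is simply to make the single polynomial inequality governing the overflow estimate tight at its endpoint.
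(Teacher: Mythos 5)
Your skeleton is sound up to a point, and partly matches the paper: the reduction to showing $\mathrm{gain}_{\alg}\ge 1-\alpha$ in every triggering case, the non-triggering case via Lemma~\ref{lem:res_size_bigalpha}, the case $x_1+\cdots+x_k\le 1$, and the overflow regime $R\le 1-\alpha$ are all correct. In that last regime your argument is genuinely nice and more elementary than the paper's: every reserved item is below $1-\alpha$, so greedily packing around $x_k$ leaves a gap below $1-\alpha$, whence $P-\alpha R>\alpha-\alpha(1-\alpha)=\alpha^2\ge 1-\alpha$ exactly when $\alpha\ge\phi-1$.

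The problem is the remaining regime $1-\alpha<R<1$, which is not a residual technicality but the core of the theorem, and your proposal only gestures at it. Both of your packings genuinely fail there: $P\ge R$ gives only $P-\alpha R\ge(1-\alpha)R$, which is below $1-\alpha$ for every $R<1$, and $P>\alpha$ gives only $P-\alpha R>\alpha(1-R)$, which vanishes as $R\to 1$. What must actually be proved is that some subset of $\{x_1,\ldots,x_k\}$ has total size in $[\,1-\alpha(1-R),\,1\,]$; since packing only reserved items can never reach this threshold, this is equivalent to finding a set of reserved items to \emph{omit} whose sum lies in the window $[\,R+x_k-1,\;R+x_k-1+\alpha(1-R)\,]$. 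That is a subset-sum statement about sums of omitted items, not a statement about how many individual items exceed $1-x_k$, so the proposed analogue of Lemma~\ref{lem:largeitems} targets the wrong quantity (and that count is not uniformly bounded anyway: it grows as $x_k\to1$ and as $\alpha\to1$, unlike the fixed-threshold count in Lemma~\ref{lem:largeitems}). Nor does the ``critical estimate collapse to $\alpha^2\ge 1-\alpha$'' in this regime: for example, when the two smallest reserved items must both be omitted to make room for $x_k$, the non-triggering conditions give $x_{k-1}<\frac{1-\alpha}{\alpha}(1-R)$ and $x_{k-2}<\frac{1-\alpha}{\alpha^2}(1-R)$, and chasing your gap bound then requires $\frac{1-\alpha}{\alpha^2}\le\alpha$, i.e., $1-\alpha\le\alpha^3$, which is \emph{false} at $\alpha=\phi-1$; closing even this small case needs an additional re-insertion/swap argument. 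The paper handles exactly this regime with machinery your proposal does not replace: induction on the trigger index $k$ (after sorting via Lemma~\ref{lem:order2}), a case distinction on whether the smallest reserved item $x_{k-1}$ belongs to an optimal packing, reduction to a shorter instance by the induction hypothesis when removing $x_{k-1}$ still triggers, and a swap with the smallest unpacked element otherwise. Until an argument of comparable strength is supplied for $R>1-\alpha$, the proof is incomplete.
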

\begin{proof}
  Let us first assume that we run Algorithm~\ref{fig:Alub} on an
  instance $x_1, \ldots, x_n$, and no element triggers the packing. This
  means that all elements are reserved. But we know by
  Lemma~\ref{lem:res_size_bigalpha} that the reservation never exceeds
  the capacity of the knapsack. This means that the optimal solution
  packs all offered elements. Thus, the algorithm achieves a competitive
  ratio of
  \[\frac{\sum_{i=1}^n x_{k-1}}{\sum_{i=1}^n
    x_{k-1}-\alpha\sum_{i=1}^n x_{k-1}}=\frac{1}{1-\alpha}\;.\] 
    
  Now, it remains to analyze the case where an instance $x_1, \hdots,
  x_n$ triggers the packing, for some $x_k$.  Let us do an induction on
  the value of $k$. If $k=1$, the first item offered triggers the
  packing, thus it holds that $x_1\ge 1-\alpha$ and the gain of the
  algorithm is at least $1-\alpha$ as we expected. Now, we assume that
  Algorithm~\ref{fig:Alub} has a gain of at least $1-\alpha$ on any
  request sequence triggering the algorithm to pack and stop before $k$
  elements are offered.
  
  We proceed similarly to the proof of Theorem~\ref{thm:ubsmall2}.
  Let us assume, using Lemma~\ref{lem:order2}, that we have an instance 
  $x_1\ge x_2 \ge \hdots\ge x_{k-1}, x_k, \hdots x_n$ where
  the algorithm packs after receiving $x_k$. We distinguish two cases.
  \begin{enumerate}
  \item  When the packing is triggered, $x_{k-1}$ is not part of an optimal packing.
  
  In this case, 
  \begin{align}
    \biggl(\sum_{j\text{ is packed}}x_j\biggr)+x_{k-1}>1\;. \label{eq:first}
  \end{align} We consider the following bounds.
  Since $x_{k-1}$ did not trigger the packing, we know that
  \begin{align}
  x_{k-1}+(1-\alpha)\sum_{j<k-1}x_j&<1-\alpha\;, \label{eq:second}
  \end{align}
  and thus also $x_{k-1} < 1-\alpha$.

  Thus, the gain that Algorithm~\ref{fig:Alub} achieves is 
  \begin{align*}
  & \hspace*{-3em} \biggl(\sum_{j\text{ is packed}}x_j\biggr)-\alpha R  \\
  & \geq 1 - x_{k-1} -\alpha \sum_{j\le k-1} x_j &&\text{using (\ref{eq:first})}\\
  & = 1 - (1+\alpha) x_{k-1} - \alpha \sum_{j < k-1} x_j  \\
  & > 1 -(1+\alpha) x_{k-1} -\alpha\biggl(1-\frac{x_{k-1}}{1-\alpha} \biggr) &&\text{using (\ref{eq:second})}\\
  & = 1 - \alpha+\biggl(\frac{\alpha}{1-\alpha}-(1+\alpha)\biggr)x_{k-1} \\
  & \geq 1 - \alpha + \frac{-1+\alpha+\alpha^2}{1-\alpha}x_{k-1} \\
  & \geq 1-\alpha  &&\text{using $\alpha\ge \phi-1$}\;.
  \end{align*}
  Thus we obtain the expected gain in this case.
  
  \item When the packing is triggered, $x_{k-1}$ is part of an optimal packing.
  We can consider two subcases.
  \begin{enumerate}
  \item Taking $x_{k-1}$ out of the request sequence still triggers the packing.
  
  This means that 
  \[x_{k}+(1-\alpha)(R-x_{k-1})\ge 1-\alpha\;\] holds.
  We can thus consider the sequence $x_1,\hdots,x_{k-2},x_{k}$. This sequence has $k-1$ items before the packing is triggered, and $x_{k-1}$ cannot be part of its optimal solution, thus its competitive ratio is at most $1/(1-\alpha)$ by the induction hypothesis, and the competitive ratio after adding $x_{k-1}$ can only get better.
  \item Taking $x_{k-1}$ out of the request sequence does not trigger the packing.
  
  This means that 
  \begin{align}
    x_{k}+(1-\alpha)(R-x_{k-1})< 1-\alpha\;,\label{eq:third}
  \end{align}
  but also
  \begin{align}
    x_{k}+(1-\alpha)R\ge 1-\alpha\;,\label{eq:fourth}
  \end{align}
  
  and, if we let $x_j$ be the smallest element that does not get packed, $x_j\ge x_{k-1}$ holds. Also
  because of the optimality of the packing, $x_{k}\ge x_j$ (otherwise one can take all of the reserved elements as the packing and obtain a better bound) and it holds that
  \begin{align}
    \biggl(\sum_{t\text{ is packed}}x_t\biggr)-x_{k-1}+x_j>1\;.\label{eq:fifth}
  \end{align}
  
  Moreover, we can bound $x_j\le x_1\le 1-\alpha$.
  With these bounds, the gain incurred by the algorithm is at least
  \begin{align*}
  & \hspace*{-3em} \biggl(\sum_{t\text{ is packed}}x_t\biggr)-\alpha R \\
  & \geq 1- x_j +  x_{k-1} -\alpha R &&\text{using (\ref{eq:fifth})}\\
  & \geq 1 -x_j + \frac{x_{k}}{1-\alpha} + R - 1 -\alpha R &&\text{using (\ref{eq:third})}\\
  & = -x_j + \frac{\alpha}{1-\alpha}x_{k}+x_{k} + (1 - \alpha) R \\
  & \geq -x_j + \frac{\alpha}{1-\alpha}x_{k} + (1 - \alpha) &&\text{using (\ref{eq:fourth})}\\
  & = 1-\alpha -x_j + \frac{\alpha}{1-\alpha}x_{k} \\
  & \geq 1-\alpha -x_j + \frac{\alpha}{1-\alpha}x_{j} &&x_k \geq x_j\\
  & = 1-\alpha +\frac{2\alpha -1}{1-\alpha}x_{j}\\
  & \geq 1-\alpha
  \end{align*}
  as we wanted.
  \end{enumerate}
  \end{enumerate}
  This proves the induction step, and thus the desired competitive ratio upper bound.
\end{proof}

We continue by providing matching lower bounds to the upper bounds of
this section.

\section{Lower Bounds}\label{sec:lb}

First we present an adversarial strategy that works for all values
of $\alpha$.  Then we proceed to analyze the case where only
four objects are presented as a generic adversarial strategy and find
improved lower bounds for some values of~$\alpha$.

\subsection{Tight Lower Bound for $0\le \alpha \le 0.25$}
\begin{theorem}\label{lb:smallalpha}
For $\alpha>0$
there exists no algorithm for reservation knapsack
achieving a  competitive ratio better than~$2$.
\end{theorem}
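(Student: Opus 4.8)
The plan is to exhibit, for each fixed $\alpha>0$, an \emph{adaptive} adversary that reacts to the three possible actions (pack, reject, reserve) and drives the ratio $\text{gain}_{\opt}/\text{gain}_{\alg}$ towards $2$. I would normalize so that the optimum equals $1$ and aim to pin every algorithm's gain down to $1/2+o(1)$. An adaptive adversary is unavoidable here: a single oblivious instance is always beaten by an algorithm that merely reserves a promising item and waits, since for small $\alpha$ one reservation is almost free. The adversary must therefore branch on what the algorithm does with each presented item, and the three branches have to be made simultaneously bad.

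First I would dispose of the two easy reactions, which fix the item sizes at roughly $1/2$. If the algorithm ever \emph{packs} a decision item (of size slightly above $1/2$), the adversary immediately reveals two further items of size $1/2$; their sum is the optimum $1$, but neither fits next to the already packed item, so the algorithm is frozen at about $1/2$ and the ratio is $2$. If the algorithm \emph{rejects} items without ever keeping anything, the adversary simply stops: the optimum is positive while the algorithm's gain is $0$, which is even worse than~$2$. Hence any algorithm hoping to stay below $2$ is forced to \emph{reserve}, and the entire difficulty is concentrated in the reserving case.

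The crux, and the step I expect to be the main obstacle, is to punish the reserve-and-wait strategy. Because a single reservation is nearly free for small $\alpha$, the adversary cannot win with $O(1)$ items; it must present a calibrated sequence of length $\Theta(1/\alpha)$ of near-$1/2$ decision items arranged so that, to still be able to assemble an optimal packing at the end, the algorithm is forced to reserve items of total size at least $1/(2\alpha)$. Reserving that much already costs at least $\alpha\cdot\frac{1}{2\alpha}=\tfrac12$, dragging the gain down to $1/2$; conversely, any attempt to economize on reservations must leave a gap that the adversary fills by revealing a completion the algorithm can no longer assemble from its (too small) reserved stock. The heart of the argument is thus an \emph{indistinguishability/accumulation} claim: no online reservation policy can simultaneously keep its total reserved size below $1/(2\alpha)$ and retain an optimal packing, because at each step the algorithm cannot tell which of the near-identical kept items will turn out to be the useful one, while the sunk cost of every reservation it does make is irrecoverable.

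Making this quantitatively tight is exactly where the surprising small-$\alpha$ hardness lives. One has to choose the item sizes and the $\Theta(1/\alpha)$ schedule so that packing (locks out the optimum), rejecting (loses a needed item), and reserving (accumulates cost towards $1/2$) are \emph{all} simultaneously disadvantageous, and in particular so that the natural escape of reserving a single good pair and completing it cheaply is blocked. Since this bound only has to reach $2$ and is superseded for $\alpha>1/4$ by Theorem~\ref{lb:smallalphabetter}, the calibration need not be pushed beyond the factor $2$, which should keep the size bookkeeping manageable; nonetheless, verifying the accumulation claim against every adaptive reservation policy is the technically delicate part I would budget the most effort for.
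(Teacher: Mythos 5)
Your plan is, in outline, exactly the paper's proof: an adaptive adversary presenting items just above $\frac12$, answering a \emph{pack} with a completion to force ratio $2$, punishing a \emph{reject} by ending the instance or completing the rejected item, and bleeding a reserving algorithm over $\Theta(1/\alpha)$ rounds until the sunk costs force the ratio to $2-\epsilon$. The one substantive difference is at the step you single out as the main obstacle, and there the paper's concrete construction shows the difficulty you budgeted for does not exist. The adversary presents size $\frac12+\delta^k$ in round $k$, with $0<\delta\ll\epsilon$; since any two presented items overflow the knapsack, no algorithm can ever pack more than one item, so every reservation is a pure, irrecoverable loss \emph{regardless of which items are kept}. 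In particular, the ``indistinguishability'' claim you propose to prove (that the algorithm cannot tell which kept item will be the useful one) is not the right statement and would be a dead end: none of the reserved items is ever privileged, and the completion item plays no role in the reserve branch at all. It is revealed only to punish a reject of item $k$: it has size $\frac12-\delta^k$, completes exactly the rejected item to $1$, and overflows with every reserved item (the sum $1+\delta^i-\delta^k$ exceeds $1$ for $i<k$). Your calibration is also off by a factor of two for the same reason: in the all-reserve branch the paper never lets the optimum reach $1$; the optimum stays at $\approx\frac12$ (a single item), so the adversary may stop as soon as the accumulated cost $\approx k\alpha/2$ reaches $\approx\frac14$, i.e., after $\approx 1/(2\alpha)$ rounds, rather than forcing reserved size $1/(2\alpha)$ against an optimum of $1$. (Your normalization can be repaired by revealing a completion at the very end, but it needs twice as many rounds and buys nothing.) With the geometric perturbations $\delta,\delta^2,\delta^3,\ldots$ in hand, all three branches reduce to one-line computations and the ``technically delicate'' accumulation step disappears.
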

\begin{proof}
Consider the following set of adversarial instances as depicted in Figure~\ref{fig:adv2comp}. Given any
$\epsilon>0$, the adversary presents first an object of size
$\frac{1}{2}+\delta$ with $0<\delta\ll\epsilon$.  If an algorithm
takes this object, an object of size 1 will follow, making its competitive
ratio $1/(\frac12+\delta)>2-\epsilon$.  If an algorithm rejects this
object, no more objects will follow and it will not be competitive.  If an
algorithm reserves this object, then an object of size $\frac12+\delta^2$
will be presented.  Observe, that these two objects do not fit together
into the knapsack.  If an algorithm takes this object, an object of size
$1$ will be presented, and again the algorithm will achieve a competitive
ratio worse than $2-\epsilon$.  If an algorithm rejects this object,
then an object of size $\frac12-\delta^2$ will be presented.  This object
does not fit in the knapsack with the first one, thus the algorithm
can only pack the first object, obtaining a competitive ratio worse
than $2-\epsilon$.  If an algorithm reserves it instead, an object
of size $\frac12+\delta^3$ will be presented.  The adversary can follow
this procedure on and on, and in each step the competitive ratios for
algorithms that accept or reject the offered item only get worse due to
the additional reservation costs.

The adversary can stop offering items as soon as the reservation costs
are such that filling the knapsack will only result in a competitive
ratio worse than~$2$.  This shows that, for every $\epsilon>0$, the
competitive ratio is at least $2-\epsilon$, so the best competitive ratio
is at least~$2$.
\end{proof}

\begin{figure}[tb]
 \centering
 \begin{tikzpicture}[node distance=2cm]
  \node (x1) at (0,0) {$x_1=\frac12+\delta$};
  
  \node[align=left] (x2t) at (0,-2) {$x_2=1$, end\\{\boldmath$\compr_{\alg}> 2-\epsilon$}};
  \node[align=left] (x2rej) at (-2.6,-2) {end\\ {\boldmath$\compr_{\alg}>2$}};
  \node[align=left] (x2res) at (2.8,-2) {$x_2=\frac12+\delta^2$};
  
  \node[align=left] (x3t) at (2.8,-4) {$x_3=1$, end\\{\boldmath$\compr_{\alg}> 2$}};
  \node[align=left,anchor=north east] (x3rej) at (1.2,-3.4) {$x_3=\frac12-\delta^2$, end\\ {\boldmath$\compr_{\alg}=\frac{1}{(1-\alpha)(\frac12+\delta)}> 2$}};
  \node[align=left] (x3res) at (5.6,-4) {$x_3=\frac12+\delta^3$};
  
  \node (x4res) at (7.2,-5) {$\hdots$};
  
  \path[draw] (x1) edge node[left,xshift=-0.5mm] {\tiny{reject}} (x2rej)
  (x1) edge node[right] {\tiny{take}} (x2t)
  (x1) edge node[above right] {\tiny{reserve}} (x2res)
  (x2res) edge node[left,xshift=-1mm] {\tiny{reject}} (x3rej)
  (x2res) edge node[right] {\tiny{take}} (x3t)
  (x2res) edge node[above right] {\tiny{reserve}} (x3res)
  (x3res) edge (x4res);
 \end{tikzpicture}

 \kern-2ex
 \caption{Sketch of the adversarial strategy that is used in the proof of Theorem~\ref{lb:smallalpha}.}
 \label{fig:adv2comp}
\end{figure}
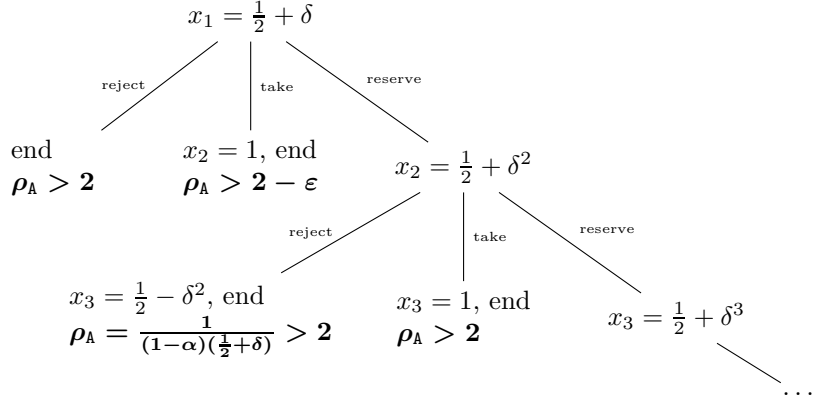
 
The adversarial strategy depicted in Figure \ref{fig:adv2comp} works
for every positive value of $\alpha$, but is of course not tight when
$\alpha>0.25$. Observe that, if an algorithm continues to reserve
items, the cumulated reservation cost will eventually exceed any
remaining gain. 

\subsection{Tight Lower Bound for $0.25\le \alpha \le 1$}

The previous strategy provided us with a lower bound that does not match the
upper bound for $\alpha>0.25$.  We improve the lower bound for larger
$\alpha$ in a way that it provides a matching bound 
by designing a generic adversarial strategy with up to four items as 
shown in Figure~\ref{fig:adv4items}, which contains the competitive ratios
for all possible outcomes. For each item, an algorithm can decide to 
either reject, pack or reserve it. Depending on an algorithm's behavior
on the previous item, it is decided whether another item is presented and, if so, which item. 
We assume by construction that $s<t<u<1$ and $s+t>1$.
The competitive ratio of such a strategy is therefore bounded by
\begin{equation}
\compr_{\alg}\ge \min\Bigl\{\frac{1}{s},~%
\frac{t}{(1-\alpha)s},~\frac{1}{t-\alpha s},~%
\frac{u}{t-\alpha (s+t)},~\frac{1}{u-\alpha(s+t)},~\frac{u}{u-\alpha(s+t+u)} \Bigr\}.
\label{equ:min_main}
\end{equation}
To prove a lower bound for every $0<\alpha<1$, we can choose $s$,
$t$ and $u$ in order to make~(\ref{equ:min_main}) as large as possible.
Standard calculus leads to the bounds in the following theorem.

\begin{figure}
 \centering
 \begin{tikzpicture}[node distance=2cm]
  \node (x1) at (0,0) {$x_1=s$};
  
  \node[align=left] (x2t) at (0,-2) {$x_2=1$, end\\{\boldmath$\compr_{\alg}=\frac1s$}};
  \node[align=left] (x2rej) at (-2.6,-2) {end\\ {\boldmath$\compr_{\alg}=\infty$}};
  \node[align=left] (x2res) at (2.8,-2) {$x_2=t$};
  
  \node[align=left] (x3t) at (2.8,-4) {$x_3=1$, end\\{\boldmath$\compr_{\alg}=\frac{1}{t-\alpha s}$}};
  \node[align=left,anchor=north east] (x3rej) at (1.2,-3.4) {end\\ {\boldmath$\compr_{\alg}=\frac{t}{(1-\alpha)s}$}};
  \node[align=left,anchor=north west] (x3res) at (4.6,-3.4) {$x_4=u$};

  \node[align=left] (x4t) at (5,-6) {$x_4=1$, end\\{\boldmath$\compr_{\alg}=\frac{1}{u-\alpha (s+t)}$}};
  \node[align=left,anchor=north east] (x4rej) at (3.5,-5.5) {end\\ {\boldmath$\compr_{\alg}=\frac{u}{t-\alpha(s+t)}$}};
  \node[align=left,anchor=north west] (x4res) at (7,-5.5) {end\\{\boldmath$\compr_{\alg}=\frac{u}{u-\alpha(s+t+u)}$}};

  \path[draw] (x1) edge node[left,xshift=-0.5mm] {\tiny{reject}} (x2rej)
  (x1) edge node[right] {\tiny{take}} (x2t)
  (x1) edge node[above right] {\tiny{reserve}} (x2res)
  (x2res) edge node[left,xshift=-1mm] {\tiny{reject}} (x3rej)
  (x2res) edge node[right] {\tiny{take}} (x3t)
  (x2res) edge node[above right] {\tiny{reserve}} (x3res)
  (x3res) edge node[left,xshift=-1mm] {\tiny{reject}} (x4rej)
  (x3res) edge node[right] {\tiny{take}} (x4t)
  (x3res) edge node[above right] {\tiny{reserve}} (x4res);
 \end{tikzpicture}
 \caption{Diagram of a generic adversarial strategy with 4 items where $s<t<u<1$ and $s+t>1$.}
 \label{fig:adv4items}
\end{figure}
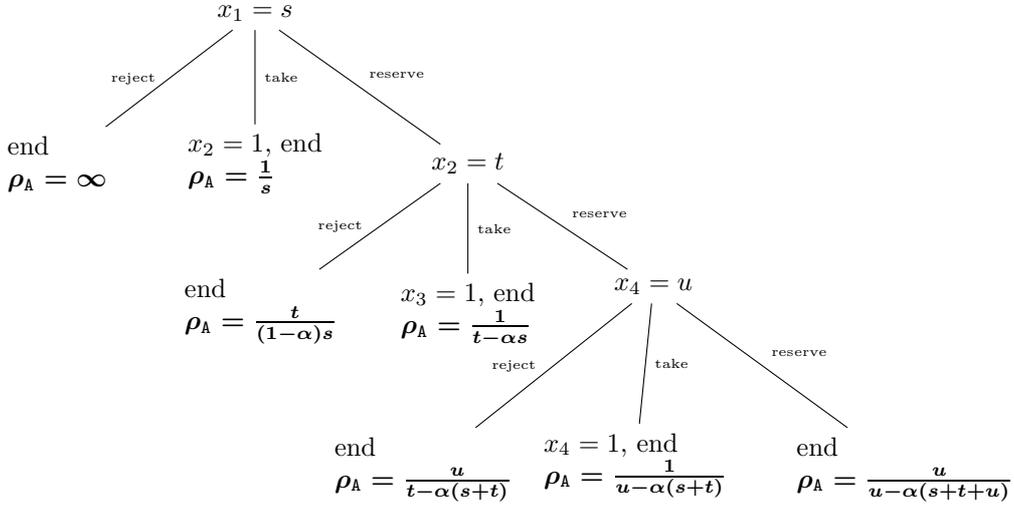

\begin{theorem}\label{lb:smallalphabetter}
The  competitive ratio of \RKP is at least
\begin{alphaenumerate}
\item $\frac{1+\sqrt{5-4\alpha}}{2(1-\alpha)}$ for $0.225\le\alpha\le \sqrt{2}-1$;
\item $2+\alpha$, for $\sqrt{2}-1 \leq \alpha \leq \phi-1$; and
\item $\frac{1}{1-\alpha}$, for $\phi-1 \leq \alpha \leq 1$.
\end{alphaenumerate}
\end{theorem}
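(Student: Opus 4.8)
The plan is to treat each of the three $\alpha$-ranges by exhibiting a concrete choice of the parameters $s,t,u$ in the strategy of Figure~\ref{fig:adv4items} and showing that the minimum in~(\ref{equ:min_main}) equals the claimed value $\rho$; since every algorithm must follow one root-to-leaf path of the tree, this yields $\compr_\alg\ge\rho$. The task is thus a max--min optimization: maximize the smallest of the six competitive ratios over the feasible region $s<t<u<1$, $s+t>1$. As is typical for such problems, the optimum is attained when several of the six terms coincide, so I would first guess the active (binding) set, solve the resulting equalities for $s,t,u,\rho$, and finally check that the remaining terms stay $\ge\rho$ and that the feasibility constraints hold throughout the interval.

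A useful preliminary observation streamlines the case analysis. Because $t>s$, $u>t$, and $s+t>0$, the three ``reject/reserve-to-the-end'' terms satisfy
\begin{equation*}
\frac{t}{(1-\alpha)s},\ \frac{u}{t-\alpha(s+t)},\ \frac{u}{u-\alpha(s+t+u)}\ >\ \frac1{1-\alpha}
\end{equation*}
identically, so the competition that can push $\compr_\alg$ down is really among the three ``take-then-full-item'' terms $1/s$, $1/(t-\alpha s)$, $1/(u-\alpha(s+t))$. For range (b) I would put the boundary $s+t=1$ in the limit and balance $1/s=1/(t-\alpha s)=2+\alpha$, which forces $s=1/(2+\alpha)$ and $t=(1+\alpha)/(2+\alpha)$; a direct computation then gives the reserve-reject term the value $(1+\alpha)/(1-\alpha)$, and the key inequality $(1+\alpha)/(1-\alpha)\ge 2+\alpha$ holds exactly for $\alpha\ge\sqrt2-1$, which is precisely the left endpoint of range (b). Checking that the remaining reservation-penalty terms stay above $2+\alpha$ on $[\sqrt2-1,\phi-1)$ completes that case; here two items already suffice and $u$ may be dropped.

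For range (a) the larger target forces the reserve-reject term to become active, so here I would keep $s+t=1$ but instead balance the pair $t/((1-\alpha)s)=1/(t-\alpha s)=\rho$. Eliminating $s$ and $t$ collapses this system together with $s+t=1$ to the quadratic $(1-\alpha)\rho^2-\rho-1=0$, whose positive root is exactly $(1+\sqrt{5-4\alpha})/(2(1-\alpha))$; one checks that it coincides with $2+\alpha$ at $\alpha=\sqrt2-1$, so the two ranges match up. The third item is genuinely needed now: I would choose $u$ in the window forced by the two constraints $u\ge\rho(t-\alpha)$ and $u\le 1/\rho+\alpha$, verify that this window is nonempty and contained in $(t,1)$ on $[0.225,\sqrt2-1]$ (its closing is what produces the left endpoint $0.225$), and confirm that $1/s$ and the two remaining reservation terms stay $\ge\rho$.

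Finally, range (c) needs a separate, simpler instance, since the take-then-full-item constraints $1/s\ge\rho$ and $1/(t-\alpha s)\ge\rho$ together with $s+t>1$ become jointly infeasible once $\alpha>\phi-1$ (equivalently, once $\rho=1/(1-\alpha)$ exceeds $2+\alpha$). I would instead use two items that \emph{do} fit together, with $s$ arbitrarily small and $t\to(1-\alpha)^-$: reserving both and packing both yields gain $(1-\alpha)(s+t)$ against an optimum of $s+t$, i.e.\ ratio exactly $1/(1-\alpha)$, while taking either item is punished by a following full item and every reject branch is harmless. I expect the main obstacle to be the bookkeeping of range (a): correctly identifying that $s+t=1$ and the reserve-reject/take pair are the binding constraints (rather than $1/s$), and then verifying feasibility of the choice of $u$ uniformly across the whole interval rather than at a single value of $\alpha$.
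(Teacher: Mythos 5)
Your proposal is correct and follows essentially the same route as the paper's proof: the same adversarial tree with the same balanced parameter choices---$s=1/(2+\alpha)$, $t=(1+\alpha)/(2+\alpha)$ in case (b), and in case (a) the system $t/((1-\alpha)s)=1/(t-\alpha s)=\rho$ together with $s+t=1$, whose quadratic $(1-\alpha)\rho^2-\rho-1=0$ and whose $u$-window (closing at $\alpha\approx 0.224$) reproduce exactly the paper's values of $s$, $t$, $u$ and its endpoint condition. The only divergence is case (c), where the paper presents a single item of size $1-\alpha$ while you use two items that fit together; your variant works as well, but the loose limits ``$s$ arbitrarily small, $t\to(1-\alpha)^-$'' must be coupled, e.g.\ as $t\le(1-\alpha)(1-s)$, so that the branch where the algorithm packs $t$ (gain $t+(1-\alpha)s$) still yields a ratio of at least $1/(1-\alpha)$.
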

\begin{proof}
Let us consider an adversary that presents an item of size $s$
first. The size of this item has to be smaller than $\frac12$, otherwise
an algorithm can take this item and get a competitive ratio smaller
than $2$.  If $s$ is taken, the adversary presents an item of size $1$.
If $s$ is rejected, no more items will be offered.  If $s$ is reserved, an
item of size $t$ is presented, where $s+t>1$.  
If $t$ is taken, the adversary presents an item of
size $1$ and if $t$ is rejected, no more items will be offered,
and, if $t$ is reserved, an item of size $u$ is presented with $u>t$,
if $u$ is taken, the adversary presents an item of size $1$ and otherwise
the request sequence ends in this third item.

For this adversarial strategy, we can compute the competitive
ratio for all possible algorithms as can be seen in Figure~\ref{fig:adv4items}.

Thus, the competitive ratio of any algorithm $\alg$ confronted with
this adversary is bounded by
Equation (\ref{equ:min_main}), which we want to maximize.

Choosing $s=\frac{2}{3+\sqrt{5-4\alpha}}+\varepsilon$, $t=\frac{\sqrt{5-4\alpha}-1+2\alpha}{2(1+\alpha)}$, and
$u=\frac{\alpha+\sqrt{4(t-\alpha)+\alpha^2}}{2}$, we see that, if $0.225\le \alpha \le \sqrt{2}-1$, then $
\compr_{\alg}\ge \frac{1+\sqrt{5-4\alpha}}{2(1-\alpha)}$.

If we compare the given competitive ratio to all of the expressions from (\ref{equ:min_main}), and substitute
the values of $s$, $t$ and $u$ but omit the $\varepsilon$ in the $s$, we see immediately that 
$1/s=\frac{3+\sqrt{5-4\alpha}}{2}\ge \frac{1+\sqrt{5-4\alpha}}{2(1-\alpha)}$ for $\alpha \leq \sqrt{2}-1$, and 
$\frac{1}{t-\alpha s}=\frac{t}{s-\alpha(s+t)}=\frac{1+\sqrt{5-4\alpha}}{2(1-\alpha)}$.

The last three expressions are $\frac{1}{u-\alpha}$, $\frac{u}{t-\alpha}$ and $\frac{u}{u(1-\alpha)-\alpha}$.
For the given $u$, we see that $\frac{1}{u-\alpha}=\frac{u}{t-\alpha}\ge \frac{1+\sqrt{5-4\alpha}}{2(1-\alpha)}$
for $\alpha\ge \alpha_0$, where $\alpha_0\approx 0.224$ is the unique positive root of $x^4+2x^3-2x^2+5x-1$. 
Finally, we observe that $\frac{u}{u(1-\alpha)-\alpha}\geq  \frac{1+\sqrt{5-4\alpha}}{2(1-\alpha)}$ for $\alpha \geq 0.19$
and in particular through the desired range.

Note that $2 \geq \frac{1+\sqrt{5-4\alpha}}{2(1-\alpha)}$ for $\alpha
\leq 0.25$, so while the lower bound is correct for values of $\alpha$
down to approximately $0.224$, our previous lower bound dominates it up to
$\alpha = 0.25$.

For the range of $\sqrt{2}-1\leq \alpha \leq \phi-1$, we choose $s=\frac{1}{2+\alpha}$, $t=1-s+\varepsilon$ and no element $u$.
We can rewrite Equation (\ref{equ:min_main}) as 

\begin{equation}
\compr_{\alg}\ge \min\Bigl\{\frac{1}{s},~%
\frac{1}{t-\alpha s},~\frac{1}{t-\alpha s},~%
\frac{t}{t-\alpha (s+t)} \Bigr\}.
\label{equ:min_2}
\end{equation}
Substituting the appropriate values of $s$ and $t$ in Equation (\ref{equ:min_2}) yields
\[\compr_{\alg}\ge\min\Bigl\{ 2+\alpha, \frac{1+\alpha}{1-\alpha}, \frac{1+\alpha}{1-\alpha-\alpha^2}\Bigr\}.\]
But $\frac{1+\alpha}{1-\alpha}\ge 2+\alpha$ for $\alpha \ge \sqrt{2}-1$ and $\frac{1+\alpha}{1-\alpha-\alpha^2}\ge \frac{1+\alpha}{1-\alpha}$
as long as the denominator is not negative, that is, as long as $\alpha \le \phi -1$.

Finally, for the range $\phi-1 \leq \alpha \leq 1$, we choose $s=1-\alpha$ and no elements $t$ or $u$.
We can rewrite Equation (\ref{equ:min_main}) as 

\begin{equation}
\compr_{\alg}\ge \min\Bigl\{\frac{1}{s},~%
\frac{s}{s-\alpha s} \Bigr\}.
\label{equ:min_3}
\end{equation}
Substituting the appropriate values of $s$ in Equation (\ref{equ:min_3}) yields
$\compr_{\alg}\ge\frac{1}{1-\alpha}$
as expected.
\end{proof}

\section{Nonrejecting Algorithms}\label{sec:nonrej}
A seemingly plausible intuition for \RKP might be
the following: If the cost of reservation is very small, rejecting an item
should not be necessary, as even when an item cannot be packed, the cost of
reserving it is negligible. On the other hand, when the reservation cost is
rising, aggressively reserving items may seem like a very bad strategy, as 
the risk of not being able to utilize reserved items may come to
mind. Interestingly, both of these intuitions turn out to be wrong, which we 
will show by first giving a lower bound for nonrejecting algorithms (i.e., algorithms that reject nothing until they pack the knapsack, after which all remaining items are rejected) that 
exceeds the upper bound of a rejecting algorithm for small $\alpha$ and that 
tightly matches the upper bound of a nonrejecting algorithm for bigger $\alpha$.

We first provide a lower bound
for algorithms that are
unable to reject items.

\begin{figure}[tb]
  \centering
  \begin{tikzpicture}[node distance=2cm]
   \node (x1) at (0,0) {$x_1=\frac{1}{2+\alpha}$};
   
   \node[align=left] (x2t) at (-2.8,-1) {$x_2=1$, end\\{\boldmath$\compr_{\alg}=2+\alpha$}};
   \node[align=left] (x2res) at (2.8,-1) {$x_2= \frac{1+\alpha}{2+\alpha} + \epsilon$};
   \node[align=left] (x3t) at (0,-2) {$x_3=1$, end\\{\boldmath$\compr_{\alg}= 2+\alpha$}};
   \node[align=left] (x3res) at (5.6,-2) {$x_3=\frac{1+\alpha}{2+\alpha} + \epsilon$\\{\boldmath$\compr_{\alg}> 2+\alpha$}};
   
   \node (x4res) at (7.2,-3) {$\hdots$};
   \node (x4t) at (2.8,-3) {$\hdots$};

   \path[draw]
   (x1) edge node[above left] {\tiny{take}} (x2t)
   (x1) edge node[above right] {\tiny{reserve}} (x2res)
   (x2res) edge node[above left] {\tiny{take}} (x3t)
   (x2res) edge node[above right] {\tiny{reserve}} (x3res)
   (x3res) edge (x4res)
   (x3res) edge (x4t);
  \end{tikzpicture}

  \kern-2ex
  \caption{Sketch of the adversarial strategy that is used in the proof of Theorem~\ref{thm:nonrejectinglb}. Upon continued reservation, a new item of the same size is presented repeatedly.}
  \label{fig:nonrejecting}
\end{figure}
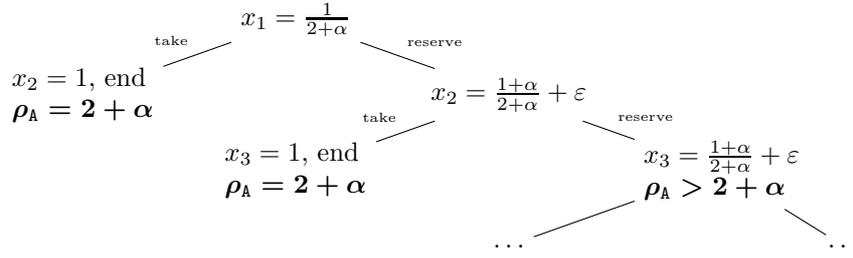

\begin{theorem}\label{thm:nonrejectinglb}
  There exists no deterministic online algorithm for \RKP that does not
  reject any elements with a  competitive ratio better than
  $2+\alpha$ for any $0 < \alpha \leq 1$.
\end{theorem}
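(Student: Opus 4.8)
The plan is to fix an arbitrary deterministic nonrejecting algorithm \alg and run it against the adaptive adversary sketched in Figure~\ref{fig:nonrejecting}, parameterized by a small $\epsilon>0$. Since \alg never rejects, at each request it either reserves the item or performs its final packing and stops (thereafter rejecting everything), so its entire behavior on this instance is a sequence of reservations terminated by at most one packing decision. The crucial structural facts I would establish first are that the repeated large item $L=\frac{1+\alpha}{2+\alpha}+\epsilon$ satisfies $L>\frac12$ (so at most one copy of $L$ ever fits into the knapsack) and that $x_1+L=1+\epsilon>1$ (so the small first item $x_1=\frac1{2+\alpha}$ can never be co-packed with any $L$). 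These two facts mean that whatever \alg reserves, its final knapsack holds at most a single large item, while it still pays the reservation cost $\alpha R$ on everything it reserved.

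With this in hand I would walk down the adversary tree and bound \compr in each terminal case. If \alg packs $x_1$ immediately, the adversary answers with $1$ and $\compr_{\alg}=2+\alpha$ exactly. If \alg reserves $x_1$ and then packs the first $L$, its gain is $L-\alpha x_1=\frac1{2+\alpha}+\epsilon$ against $\opt=1$, giving $\compr_{\alg}=\frac{2+\alpha}{1+(2+\alpha)\epsilon}$. If \alg reserves $x_1$ together with $j\ge1$ further copies of $L$ before packing, each extra reservation subtracts $\alpha L$ from the gain, so the gain drops to $\frac{1-\alpha-\alpha^2}{2+\alpha}+\epsilon(1-\alpha)$ (or below), and the ratio against $\opt=1$ tends to $\frac{2+\alpha}{1-\alpha-\alpha^2}>2+\alpha$, becoming unbounded once the gain is nonpositive (which is the regime $\alpha\ge\phi-1$). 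Finally, if \alg reserves forever, the adversary presents copies of $L$ and stops after $m$ items: the gain $L-\alpha\bigl(x_1+(m-1)L\bigr)$ decreases by $\alpha L>0$ at every step and therefore eventually falls below $\frac{L}{2+\alpha}$ (indeed below $0$), so against $\opt=L$ the ratio exceeds $2+\alpha$ for $m$ large enough.

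The delicate point is the second case: for every \emph{fixed} $\epsilon>0$ the strategy ``reserve $x_1$, pack the first $L$'' attains a ratio \emph{strictly below} $2+\alpha$, which is precisely why $\epsilon>0$ is needed at all (with $\epsilon=0$ one would have $x_1+L=1$, letting the algorithm pack both and achieve gain near $1$). The resolution, and the way I would phrase the conclusion, is to treat the construction as a family of instances indexed by $\epsilon$: the ratios $\frac{2+\alpha}{1+(2+\alpha)\epsilon}$ have supremum $2+\alpha$ as $\epsilon\to0^{+}$, while every other branch already yields a ratio $\ge 2+\alpha$. Hence, for every $\delta>0$ some instance forces \alg to a ratio exceeding $2+\alpha-\delta$, so the competitive ratio of \alg, being the supremum over all instances, is at least $2+\alpha$. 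As \alg was an arbitrary nonrejecting algorithm and the whole argument is valid for all $0<\alpha\le1$, this proves the claimed lower bound.
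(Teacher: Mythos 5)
Your proposal is correct and uses essentially the same construction as the paper: the identical adversary (first item $\frac{1}{2+\alpha}$, repeated items of size $\frac{1+\alpha}{2+\alpha}+\epsilon$ upon reservation, and an item of size $1$ after any take), with the same branch-by-branch analysis. If anything, your handling of the delicate branch---where packing the first large item yields a ratio of $\frac{2+\alpha}{1+(2+\alpha)\epsilon}$, strictly below $2+\alpha$ for any fixed $\epsilon>0$, so that the bound only follows by taking the supremum of the instance family as $\epsilon\to 0^{+}$---is spelled out more explicitly than in the paper, which passes over this limiting step in silence.
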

\begin{proof}
Let $\epsilon > 0$. Consider the following set of adversarial instances
depicted in Figure~\ref{fig:nonrejecting}. First the adversary presents an
item of size $\frac{1}{2}-\frac{\alpha}{4+2\alpha} = \frac{1}{2+\alpha}$.
If an algorithm takes this item the adversary will present an item of size
1, and the algorithm will have a competitive ratio of $2+\alpha$ as
claimed. If an algorithm reserves this first item, the adversary will
present a second item of size
$\frac{1}{2}+\frac{\alpha}{4+2\alpha}+\epsilon = \frac{1+\alpha}{2+\alpha}
+ \epsilon$. As both items do not fit into the knapsack together, an
algorithm can decide to either take the larger of the two items or to
reserve this second item as well. If it takes the larger item, the
adversary will again present an item of size one. An algorithm can thus
achieve a gain of
\[\frac{1+\alpha}{2+\alpha}+\epsilon - \alpha\frac{1}{2+\alpha} =
\frac{1}{2+\alpha} + \epsilon\] which results in the claimed competitive
ratio. Finally, whenever an algorithm decides to reserve an item, from this
point onwards, another item of size $\frac{1+\alpha}{2+\alpha} + \epsilon$
is presented. At any point at which the algorithm decides to take a
presented item, an item of size 1 is presented afterwards. As there are no
two items in the instance that fit into the knapsack together, the gain of
any deterministic nonrejecting algorithm will strictly decrease with every
reservation.
\end{proof}

Combined with the upper bound given in Theorem~\ref{thm:ubtiny}, we see that an
algorithm that is unable to reject items performs quite a bit worse than one
that is able to reject items, such as Algorithm~\ref{alg:alphatinyupperb}. Thus, an
algorithm needs to be able to reject items to become
2-competitive for small values of $\alpha$.
On the other hand, the lower bound provided in Theorem~\ref{thm:nonrejectinglb}
matches the upper bound of Theorem~\ref{thm:ubsmall}, which is based on the
nonrejecting Algorithm~\ref{fig:Asub}. We conjecture that there are nonrejecting
algorithms for every $\alpha \geq \sqrt{2}-1$ that are at least as good as any
other algorithms that are able to reject items.

\section{Further Work}\label{sec:conclusion} In this work, we provide
tight competitive ratios for the~\RKP. While the model seems natural
to analyze, there are possible variations that could be studied in
order to see how a variation of the reservation model influences the
competitive ratio.  One could consider a variant where reservation
costs are refunded if the item is used, modelling real-world
applications such as a down payment. Another reasonable model could
look at the largest sum of items in a reserve at any time and ask for
reservation costs relative to this size, such as when companies have
to rent storage space.

The concept of reservation may be applied to other online problems
such as online call admission problems in networks
\cite{AwerbuchAP93,0097013,Komm16} or problems of embedding guest
graphs into a host graph.  In the online path packing problem, one
packs paths in a edge-disjoint way (sometimes node-disjointly) into a
graph, which is a generalization of~\RKP, thus inheriting all lower
bounds. The offline version was studied on many types of graphs, with
an incomplete selection being
\cite{Frank90,SchrijverS94,Vanetik09,Bryant10}.
The challenge when applying the idea of reservation is to find an
appropriate cost function that can be measured against the competitive
ratio.

\bibliography{knapsack-journal}

\end{document}